\theoremstyle{plain}
\newtheorem{theorem}{Theorem}
\newtheorem{lemma}[theorem]{Lemma}
\newtheorem{corollary}{Corollary}[theorem]
\pgfplotsset{compat=newest}
\pgfplotsset{
    colormap={mycm}{rgb255=(225, 225, 225) rgb255=(225, 225, 225)},
    colormap/mycm/.style={
        colormap name=mycm,
    },
}
\title{Fermi-Hubbard model on non-bipartite lattices: flux problem and emergent chirality}
\author[1]{Wayne Zheng}
\affil[1]{Department of Physics, The Ohio State University, Columbus, Ohio 43210, USA}
\date{\today}
\begin{document}
\pagenumbering{arabic}

\maketitle

\begin{abstract}
On several one dimensional (1D) and 2D non-bipartite lattices, we study both free and Hubbard interacting lattice fermions when there are some magnetic fluxes threaded or gauge fields coupled.
On one hand, we focus on finding out the optimal flux which minimizes the energy of fermions at specific fillings.
For spin-$1/2$ fermions at half-filling on a ring lattice consisting of odd-numbered sites, the optimal flux turns out to be $\pm\pi/2$.
We prove this conclusion for Hubbard interacting fermions utilizing a \emph{generalized reflection positivity} technique.
It can lead to further applications on 2D non-bipartite lattices such as triangle and Kagome.
At half-filling the optimal flux patterns on the triangular and Kagome lattice can be ascertained to be $\pm[\pi/2, \pi/2]$, $\pm[\pi/2, \pi/2, 0]$, respectively (see the meaning of these notations in the main text).
We also find that chirality emerges in these optimal flux states.
On the other hand, we verify these exact conclusions and further study some other fillings with the numerical exact diagonalization method.
It is found that, when it deviates from half-filling, Hubbard interactions can alter the optimal flux patterns on these lattices.
Moreover, both in 1D and 2D, numerically observed emergent flux singularities driven by strong Hubbard interactions in the ground states are discussed and interpreted as some kind of non-Fermi liquid features.
\end{abstract}

\tableofcontents

\section{Introduction}
\label{sec:intro}
The Fermi-Hubbard model is very famous and important~\cite{nphys2013}.
It has appealed to researchers for decades as the simplest route towards understanding strongly correlated fermionic quantum many-body systems.
It is widely believed that the Hubbard model should be closely related to the essential ingredients of Mott insulator and high-temperature superconductivity~\cite{Baskaran1987, RevModPhys.66.763, RevModPhys.78.17, anderson1997}.
Numerically it has also been studied extensively~\cite{PhysRevB.31.4403, PhysRevB.40.506, PhysRevX.5.041041}.
Recently, interest on it is stimulated again since it has been simulated by ultracold atoms in experiments~\cite{Jaksch2005, Mazurenko2017, Tarruell2018}.
Here for our theoretical interests, we would like to mention and emphasize three related aspects.

\emph{Firstly}, since the perturbation theory always cannot provide us with faithful and clear results if Hubbard interactions are sufficiently strong, rigorous theorems bring forward many great insights into the non-perturbative features in the Hubbard model~\cite{1995cond.mat.12169T}.
On a 1D bipartite lattice\footnote{A bipartite lattice $\Lambda$ is the one $\Lambda=A\cup B, A\cap B=\emptyset$ and $t_{ij}=0$ if $i, j\in A$ or $i, j\in B$~\cite{PhysRevLett.73.2158}, where $t_{ij}$ is the fermion hopping amplitude.}, it has been solved exactly and shown that there is no Mott transition~\cite{PhysRevLett.20.1445, Lieb2003}.
On 2D bipartite lattices, at half-filling E. H. Lieb settled the ground state's uniqueness and its total spin up to any finite repulsive Hubbard interactions~\cite{PhysRevLett.62.1201}.
If a hole is doped on 2D bipartite lattices, strong Hubbard interactions can induce an emergent Nagaoka ferromagnetism~\cite{PhysRev.147.392}.
We notice that, both in 1D and 2D, the \emph{bipartiteness} plays an important role in many of these significant theorems.
It leads to a special kind of particle-hole transformation, where a minus sign only follows on one of the two bipartite subsets of lattice sites.
Moreover, quantum Monte Carlo also can avoid the severe sign problem due to the bipartiteness~\cite{Meng2010, Sorella2012}.
Thus a natural question can be asked: \emph{Why does the bipartiteness seem to be so essential here? What will happen if we lose it?}

\emph{Secondly}, without any doubt fermions and gauge fields indeed can emerge from very different strongly correlated bosonic quantum many-body systems~\cite{Anderson393, PhysRevB.37.580, PhysRevB.46.5621, PhysRevB.67.245316, rsta.2015.0248}.
Exactly solvable Kitaev's honeycomb spin model is supposed to be the most convincing example, which equivalently turns out to be emergent free Majorana fermions coupled to a $\mathbbm{Z}_{2}$ lattice gauge field~\cite{Kitaev2006}.
These emergent gauge fields living on the lattice links can form magnetic fluxes, of which the corresponding effective magnetic field can be so strong that no experiments can realize it on the earth.
The low energy gauge fluctuations above the mean-field state turn out to be crucial and even the topology of the gauge fields plays a significant role~\cite{PhysRevB.37.3774, wen2004quantum, rsta.2015.0248, Sachdev2018}.
In these kinds of fermion-gauge field coupled systems, finding out the optimal flux pattern to minimize the ground state energy at zero-temperature or statistical free energy at any finite temperature is called the \emph{flux problem}.
For Hubbard interacting fermions at half-filling, E. H. Lieb solved the flux problem on generic 2D bipartite lattices with the help of an elegant technique called \emph{reflection positivity}~\cite{PhysRevLett.73.2158, Macris1996} (RP) which was first introduced in the quantum field theory~\cite{Osterwalder1978}.
Lieb's result directly leads to the solution of Kitaev's honeycomb model.
The optimal $\pi$-flux Dirac state on a square lattice has been observed numerically in a fermion-gauge fields coupled system~\cite{Gazit2017}.
It is used to serve as a good starting point to construct quantum spin liquids (QSLs) in the language of fermionic partons~\cite{PhysRevB.65.165113, PhysRevX.8.011012}.
Note that days earlier, high-$T_{c}$ superconductivity is also found to be closely related to the flux issue~\cite{PhysRevB.37.3664, PhysRevB.37.3774}.

\emph{Thirdly}, it is well known that the spin chiral operator $\chi\equiv\mathbf{\sigma}_{1}\cdot(\mathbf{\sigma}_{2}\times\mathbf{\sigma}_{3})$ can be expressed by the flux Berry phase $\phi$ acquired by fermions hopping along a closed plaquette~\cite{PhysRevB.39.11413}.
To be specific, $\langle\chi\rangle\propto\sin\phi$, where $\phi$ is the flux threaded throughout the plaquette.
For 2D bipartite lattices, the typical $0$ or $\pi$-flux optimal states are non-chiral, where chirality $\langle\chi\rangle$ vanishes.
Therefore, there does not exist persistent spin current around the plaquettes induced only by non-zero chirality.

In this sense, in this paper we would like to investigate Hubbard interacting fermions without bipartiteness any longer, to see its interplay with gauge fields and chirality.
We obtained several new results analytically as well as numerically.
The rest of this paper is organized as follows.
In Sec.~\ref{sec:1d}, from non-interacting to interacting cases, 1D lattice fermions are investigated.
The optimal flux for spin-$1/2$ fermions at half-filling on a non-bipartite odd-numbered ring is proved and verified numerically no matter the Hubbard interactions are present or not.
In Sec.~\ref{sec:2d}, we generalize our technique to 2D and study the flux problem for the Hubbard model on 2D non-bipartite lattices.
At half-filling, the optimal flux patterns for the triangular and Kagome lattices can be nailed down.
However, when it deviates from half-filling, there is no rigorous analytical results anymore.
Some numerical results are provided and discussed in both 1D and 2D.
In particular, emergent flux singularities driven by strong Hubbard interactions are addressed and identified as some non-Ferm liquid (NFL) features.
In Sec.~\ref{sec:summary}, we end up with a brief summary and discussion.

\section{1D lattice}
\label{sec:1d}
\subsection{Non-interacting spin-$1/2$ fermions}
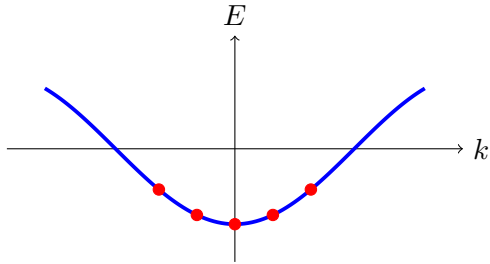
\begin{figure}[!ht]
    \centering
    \begin{tikzpicture}[]
        \def\r{0.075cm};
        \pgfmathsetmacro\de{0.0};
        \pgfmathsetmacro\xshift{1.5*sqrt(3)};
        \draw[->] (-3.0, 0) -- (3.0, 0) node[right] {$k$};
        \draw[->] (0, -1.5) -- (0, 1.5) node[above] {$E$};
        \draw[samples=100, scale=1.0, domain=-2.5:2.5, variable=\k, color=blue, line width=0.05cm] plot ({\k}, {-cos(\k/pi*180)});
        \draw[fill, color=red] ({0}, {-cos(0)}) circle (\r);
        \draw[fill, color=red] ({0.5}, {-cos(0.5/pi*180)}) circle (\r);
        \draw[fill, color=red] ({-0.5}, {-cos(-0.5/pi*180)}) circle (\r);
        \draw[fill, color=red] ({1.0}, {-cos(1.0/pi*180)}) circle (\r);
        \draw[fill, color=red] ({-1.0}, {-cos(-1.0/pi*180)}) circle (\r);
    \end{tikzpicture}
    \caption{Band of free fermions in 1D.}
   \label{fig:1d_band}
\end{figure}

In the first place, let us warm up by taking a look at the simplest case, namely fermions living on a 1D lattice, which is always assumed to form a ring.
For two branches of non-relativistic spin-$1/2$ free non-interacting fermions $\sigma=\uparrow, \downarrow$, they can be treated separately as
\begin{equation}
    H_{t}=-\sum_{\sigma}\sum_{j=0}^{L-1}\left(t_{j, j+1} c_{j\sigma}^{\dagger}c_{j+1, \sigma}+h.c. \right).
    \label{eq:ham_free_spinful}
\end{equation}
As usual, $j+1\equiv (j+1)\mod L$.
$\{c_{i, \sigma}^{\dagger}, c_{j, \sigma^{\prime}}\}=\delta_{i,j}\delta_{\sigma, \sigma^{\prime}}$ and $\{c_{i, \sigma}, c_{j, \sigma^{\prime}}\}=\{c_{i, \sigma}^{\dagger}, c_{j, \sigma^{\prime}}^{\dagger}\}=0$ define the complex fermionic operators.
$t_{j, j+1}$ is the Wannier hopping amplitude and $|t_{j, j+1}|=t=1.0$ is set to be energy unit throughout this paper.
A magnetic flux $\phi$ can be added through appropriate boundary conditions such as $t_{N-1, 0}=e^{-\text{i}\phi}$ while the others $t_{j,j+1}=1.0, j\neq L-1$.
By a discrete Fourier transformation $c_{j}=\frac{1}{\sqrt{L}}\sum_{k}e^{\text{i}kj}c_{k}$, $H_{t}=-\sum_{\sigma}\sum_{k}(2\cos{k})c_{k, \sigma}^{\dagger}c_{k, \sigma}$, where $k=(2\pi l+\phi)/L$ is constrained by the boundary condition.
$l\in\mathbbm{Z}$.
The ground state energy is determined by the band fillings given $\text{U}(1)$ conserved particle numbers $N_{\uparrow, \downarrow}$, which is illustrated in Figure~\ref{fig:1d_band}.
It is easy to check that, when $N_{\uparrow}=N_{\downarrow}=2n, n\in\mathbbm{Z}_{+}$, the optimal flux takes $\phi_{\text{opt}}=\pi$.
When $N_{\uparrow}=N_{\downarrow}=2n+1, n\in\mathbbm{Z}_{+}$, the optimal flux is $\phi_{\text{opt}}=0$.
When $N_{\uparrow}=2n, N_{\downarrow}=2m-1, n, m\in\mathbbm{Z}_{+}$, the optimal flux should lie at some value between $0$ and $\pi$ to minimize the filling energy of these two branches of fermions.
According to the discussion in the Appendix~\ref{sec:1d_spinless} if we implement the Jordan-Wigner transformation on these two branches of fermions separately, different parities would lead to a competition when it comes to minimizing the ground state energies by the natural inequality~\cite{PhysRevLett.111.100402, PhysRevB.97.125153}.
Therefore, a chiral optimal flux indeed can emerge in such a scenario.
Generally, it depends on $N_{\uparrow, \downarrow}$ and $L$ as $\phi_{\text{opt}}=\phi_{\text{opt}}(N_{\uparrow}, N_{\downarrow}, L)$.
The optimal flux in this scenario can be determined by certain transcendental triangular equation, which can be solved numerically.
For example, if $N_{\uparrow}=2, N_{\downarrow}=1$, at a local minimum we have
\begin{equation}
    2\sin\left(\frac{\phi}{L}\right)=\sin\left(\frac{2\pi-\phi}{L}\right).
    \label{eq:optimal_flux_filling21}
\end{equation}
Say, $L=5$, here the optimal flux $\phi_{\text{opt}}\simeq 1.9536$.
However, we found that $\phi_{\text{opt}}$ is so much special when it comes to the half-filled case, which is \emph{independent} of the odd $L$.
For simplicity, here we only focus on the cases with a minimal $|S_{\text{tot}}^{z}|=\frac{1}{2}|N_{\uparrow}-N_{\downarrow}|$ on a non-bipartite odd-numbered ring.
Then we have the following lemma:
\begin{lemma}
    \label{lem:free_spinful}
    For spin-$1/2$ free non-interacting fermions with a minimal $|S_{\text{tot}}^{z}|$ on a non-bipartite odd-numbered ring at half-filling, the optimal fluxes for the ground states are $\pm\pi/2$, which are independent of the lattice size $L$.
\end{lemma}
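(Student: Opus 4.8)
The plan is to exploit the fact that, for non-interacting spin-$1/2$ fermions, $H_{t}$ decouples into two independent spinless problems, so the ground-state energy is $E(\phi)=E_{\uparrow}(\phi)+E_{\downarrow}(\phi)$, where $E_{\sigma}(\phi)$ is the sum of the $N_{\sigma}$ lowest single-particle energies $\epsilon_{l}=-2\cos k_{l}$ with $k_{l}=(2\pi l+\phi)/L$. Writing $L=2m+1$, the minimal-$|S_{\text{tot}}^{z}|$ half-filled sector forces $N_{\uparrow}=m+1$ and $N_{\downarrow}=m$, so that $N_{\uparrow}+N_{\downarrow}=L$ and $N_{\uparrow}-N_{\downarrow}=1$. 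A useful preliminary observation is that $\sum_{l=0}^{L-1}\epsilon_{l}=0$ for every $\phi$, since the $k_{l}$ are equally spaced on the circle.

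First I would establish two symmetries of $E(\phi)$. Relabelling $l\to L-l$ shows $\epsilon_{l}(\phi)=\epsilon_{L-l}(-\phi)$, so each $E_{\sigma}$ is \emph{even} in $\phi$. Next, the spectral reflection $k\mapsto\pi-k$ sends the flux-$\phi$ momentum grid onto the flux-$(\pi-\phi)$ grid while flipping $\epsilon\mapsto-\epsilon$; combining this with $\sum_{l}\epsilon_{l}=0$ and the filling numbers $N_{\uparrow}+N_{\downarrow}=L$, $N_{\uparrow}-N_{\downarrow}=1$ yields the duality $E_{\uparrow}(\pi-\phi)=E_{\downarrow}(\phi)$ and $E_{\downarrow}(\pi-\phi)=E_{\uparrow}(\phi)$. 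Hence $E(\pi-\phi)=E(\phi)$: the total energy is symmetric about $\phi=\pi/2$. Together with evenness this already pins $\pi/2$ as a stationary point and guarantees optima occur in $\pm$ pairs, which explains the $\pm\pi/2$ structure asserted by the lemma.

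To prove that $\pi/2$ is actually the minimum (and not merely stationary), I would compute $E(\phi)$ explicitly on the open interval $(0,\pi)$. There the Fermi sea of each species is a contiguous arc of $N_{\sigma}$ momenta whose membership does not change, so summing the finite geometric series gives $E_{\sigma}(\phi)=-2C\cos\theta_{\sigma}(\phi)$, where $C=\sin(\pi m/L)/\sin(\pi/L)$ is \emph{common} to both species because $N_{\uparrow}+N_{\downarrow}=L$ forces $\sin(\pi N_{\uparrow}/L)=\sin(\pi N_{\downarrow}/L)$, and $\theta_{\sigma}$ is the arc centre, an affine function of $\phi$ with slope $1/L$. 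Crucially, $N_{\uparrow}-N_{\downarrow}=1$ makes the two centres differ by the constant $\pi/L$, so setting $u=(2\phi-\pi)/(2L)$ and using product-to-sum gives
\begin{equation}
    E(\phi)=-4C\cos\frac{\pi}{2L}\,\cos u,\qquad u=\frac{2\phi-\pi}{2L},
\end{equation}
which is manifestly minimized at $u=0$, i.e.\ $\phi=\pi/2$. Evenness and $2\pi$-periodicity then promote this to the global statement $\phi_{\text{opt}}=\pm\pi/2$, independent of $L$.

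I expect the main obstacle to be the discrete bookkeeping needed to identify the filled arcs and their centres $\theta_{\sigma}(\phi)$ uniformly in $\phi$: the two parities of $m$ interchange which species carries the $\phi/L$ versus the $(\phi-\pi)/L$ offset, and one must check that no level crossing (a jump of the Fermi arc) occurs in the interior $(0,\pi)$ and treat the band-edge degeneracies at $\phi=0,\pi$ separately. The symmetry argument of the second step is what tames this: it fixes both the centre of symmetry and the constant $\pi/L$ offset, so the only remaining content is verifying that the single analytic form $-2C\cos\theta_{\sigma}$ holds throughout $(0,\pi)$.
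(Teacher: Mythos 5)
Your proposal is correct, and it reaches the conclusion by a genuinely different (and in one respect stronger) route than the paper. The paper's proof (Appendix~B) uses the same starting point — decoupled spin species and the explicit contiguous filling of the momenta $k_l=(2\pi l+\phi)/L$ — but splits into the cases $L=4n+1$ and $L=4n+3$, writes out $E_{0\uparrow}+E_{0\downarrow}$ as sums of cosines, imposes $\partial E_0/\partial\phi=0$, and verifies that $\phi=\pi/2$ satisfies the resulting stationarity equation through a trigonometric identity that it describes as ``accidentally'' fulfilled for every $L$. That is a local, first-order check. You instead (i) prove two symmetries — evenness in $\phi$, and the particle-hole/spectral-reflection duality $E_\sigma(\pi-\phi)=E_{\bar\sigma}(\phi)$, which uses oddness of $L$ and $N_\uparrow+N_\downarrow=L$ and explains structurally why $\pi/2$ must be stationary (demystifying the paper's ``accident''); and (ii) resum the filled arc as a geometric series to get the closed form $E(\phi)=-4C\cos\bigl(\tfrac{\pi}{2L}\bigr)\cos\bigl(\tfrac{2\phi-\pi}{2L}\bigr)$ on $(0,\pi)$, which makes \emph{global} minimality at $\phi=\pi/2$ manifest rather than merely stationary, with the boundary degeneracies at $\phi=0,\pi$ handled by continuity. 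Your bookkeeping claims all check out: for $\phi\in(0,\pi)$ the filling order $l=0,-1,+1,-2,+2,\dots$ is strict so no level crossing occurs; $\sin(N_\uparrow\pi/L)=\sin(N_\downarrow\pi/L)$ gives the common prefactor $C$; and the two arc centres are $\phi/L$ and $(\phi-\pi)/L$ (which species carries which depends on the parity of $m$, exactly as you flag, mirroring the paper's two cases). The trade-off: the paper's derivative-plus-identity argument is shorter to write down, while yours delivers the full minimization statement and the symmetry explanation in one package.
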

\begin{proof}
    See Appendix~\ref{app:half_filled_odd_ring}. 
\end{proof}
For finite temperature, we can prove that
\begin{lemma}
    For spin-$1/2$ free fermions on a ring lattice defined by Eq.~(\ref{eq:ham_free_spinful}), if the parities of particle numbers $N_{\uparrow, \downarrow}$ are identical, at finite temperature the optimal flux for the statistical free energy $F$ is $0$ or $\pi$ depending on the parity is odd or even, respectively.
\end{lemma}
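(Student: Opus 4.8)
The plan is to reduce the spinful problem to a single species, exploit the fact that the entire single-particle spectrum depends on the flux only through $\cos\phi$, and then decide which of the two resulting candidate points is the global minimum by a parity argument. First I would use that the two spin branches in Eq.~\eqref{eq:ham_free_spinful} are decoupled and see the \emph{same} one-body Hamiltonian $H(\phi)$. In the canonical ensemble with $N_\uparrow,N_\downarrow$ fixed the partition function factorizes, so the free energy is additive, $F(\phi)=F_{N_\uparrow}(\phi)+F_{N_\downarrow}(\phi)$, where $F_N(\phi)=-T\ln Z_N(\phi)$ and $Z_N(\phi)=\operatorname{Tr}_N e^{-\beta H(\phi)}=e_N\!\big(\{e^{-\beta\epsilon_l(\phi)}\}\big)$ is the $N$-th elementary symmetric polynomial in the Boltzmann weights. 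Since $N_\uparrow$ and $N_\downarrow$ share the same parity, it suffices to prove the single-species statement: for $N$ spinless free fermions, $F_N$ is minimized at $\phi=0$ when $N$ is odd and at $\phi=\pi$ when $N$ is even. Summing two such terms, minimized at the same flux, then yields the lemma.

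Second I would establish the key structural simplification. The characteristic polynomial of the twisted ring hopping matrix is $\prod_l\big(\lambda-\epsilon_l(\phi)\big)=2t^{L}\big[T_L(\lambda/2t)\pm\cos\phi\big]$, with $T_L$ the Chebyshev polynomial of the first kind, so the flux enters only through the constant term. Hence the \emph{multiset} $\{\epsilon_l(\phi)\}$, and therefore every symmetric function of it---in particular $Z_N$ and $F_N$---depends on $\phi$ only through $c:=\cos\phi$. As $\phi$ sweeps $[0,\pi]$, $c$ decreases monotonically from $+1$ to $-1$; writing $F_N(\phi)=\mathcal F_N(c)$, an interior minimum in $\phi$ can occur only at an interior critical point of $\mathcal F_N$ on $(-1,1)$. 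Thus if $\mathcal F_N$ is monotone in $c$ the optimum must sit at an endpoint $\phi\in\{0,\pi\}$. Equivalently, the reflection $\phi\to-\phi$ and $2\pi$-periodicity already force $0$ and $\pi$ to be the only candidate minima, and the whole task is to decide which one and to rule out interior minima.

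Third, to fix the sign and extract the parity dependence I would compute $d\mathcal Z_N/dc$. Because only $e_L=\prod_l\epsilon_l$ varies with $c$, the level velocities take the closed form $d\epsilon_l/dc=-\kappa/\prod_{m\neq l}(\epsilon_l-\epsilon_m)$ with $\kappa=\pm2t^{L}$ constant, yielding a divided-difference-type expression for $dZ_N/dc$ whose sign is controlled by the parity of $N$. Here the particle--hole identity $Z_N(\phi)=Z_{L-N}(\pi-\phi)$ (valid for odd $L$, with analogous bookkeeping otherwise) is useful: it exchanges the two endpoints \emph{together with} the parity of the filling, showing that the odd-$N$ and even-$N$ claims are equivalent, so it suffices to treat one parity. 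For that base case I would expand $\ln Z_N$ in harmonics of $\phi$ via the Poisson-type identity $\sum_l f(k_l)=L\sum_n \hat f_{nL}e^{in\phi}$; the Fourier coefficients of $e^{\pm2\beta t\cos k}$ are modified Bessel functions $I_{nL}(2\beta t)>0$, so the leading harmonic is $\cos\phi$ with a positive coefficient, favoring $\phi=0$ for $N$ odd. The overall sign matches the $T\to0$ limit and the closed-shell counting already behind Lemma~\ref{lem:free_spinful}: $\phi=0$ closes the shell for odd $N$ and $\phi=\pi$ for even $N$.

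I expect the last point to be the crux. That $0$ and $\pi$ are the only critical points is immediate from the $\cos\phi$ dependence, but upgrading this to ``one of them is the \emph{global} minimum'' requires genuine monotonicity of $\mathcal F_N(c)$ on $[-1,1]$ \emph{uniformly in temperature}. The rapid decay $I_{nL}(2\beta t)\to0$ as $n$ grows makes the $n=1$ harmonic dominate, but a fully rigorous argument must bound the higher harmonics against it for all $\beta$, or equivalently prove that the divided-difference expression for $dZ_N/dc$ is sign-definite on the whole interval. Controlling these subleading contributions---rather than the reduction to one species or the location of the critical points---is where the real work lies.
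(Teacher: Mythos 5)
Your reduction to a single species (additivity of the canonical free energy over the two decoupled branches) and your Chebyshev observation that the twisted-ring spectrum depends on $\phi$ only through $c=\cos\phi$ are both correct, and together they do force $\phi=0,\pi$ to be critical points. The genuine gap is in the tool you propose for deciding which endpoint wins and for excluding interior minima. The Poisson-type identity $\sum_l f(k_l)=L\sum_n \hat f_{nL}e^{\mathrm{i}n\phi}$ linearizes only quantities that are \emph{sums over single-particle levels}: the grand potential $\sum_l\ln\bigl(1+ze^{-\beta\epsilon_l}\bigr)$, or the one-particle sum $Z_1=\sum_l e^{-\beta\epsilon_l}$. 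The canonical $\ln Z_N$ at fixed $N$ is \emph{not} of this form for $N\geq 2$ (it is the logarithm of an elementary symmetric polynomial, equivalently of a fugacity-contour projection), so the Bessel-function expansion you invoke simply does not apply to it. More fundamentally, no quantity of the form $\sum_l f\bigl(\epsilon_l(\phi)\bigr)$ can detect the parity of $N$ at all, since the level multiset is $N$-independent --- and the parity dependence is the entire content of the lemma. That dependence enters only through the fermionic exchange sign accompanying a particle that winds around the ring: winding one spin-$\sigma$ fermion once past its $N_\sigma-1$ like-spin partners produces the term $2(-)^{N_\sigma-1}D\,\delta^{L}\cos\phi$, and the higher-winding (and crossed) terms inherit compatible signs, so that when the two parities agree \emph{every} $\phi$-dependent term is maximized simultaneously at $\phi=0$ (odd) or $\phi=\pi$ (even). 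This is precisely what the paper's Trotter expansion of $\mathrm{tr}\bigl[V^{M}(\phi)\bigr]$ extracts, and it makes the uniform-in-$\beta$ monotonicity of $\mathcal F_N(c)$ --- which you correctly flag as ``the real work'' --- unnecessary: term-by-term maximization replaces it. Your fallback, matching signs to the $T\to 0$ closed-shell counting behind Lemma~\ref{lem:free_spinful}, fixes the sign only at zero temperature and cannot substitute for an argument valid at all $\beta$.

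A secondary but real problem is the particle-hole step. The identity $Z_N(\phi)=Z_{L-N}(\pi-\phi)$ does hold for odd $L$ (the spectrum is traceless and $\{-\epsilon_l(\phi)\}=\{\epsilon_l(\phi+\pi)\}$), but it only exchanges $N\leftrightarrow L-N$; starting from the one case your harmonic argument genuinely covers, $N=1$, it reaches only $N=L-1$, not all fillings of the opposite parity. For even $L$ one has $\{-\epsilon_l(\phi)\}=\{\epsilon_l(\phi)\}$, so the identity degenerates to $Z_N(\phi)=Z_{L-N}(\phi)$ and does not flip parity at all, while the lemma is stated for a ring of arbitrary length defined by Eq.~(\ref{eq:ham_free_spinful}). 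So neither your base case nor your parity-exchange reduction covers the claim; in each instance the missing ingredient is the winding/exchange-sign expansion of the canonical partition function, which is exactly the paper's proof (and, for the single-species statement, its Appendix~\ref{sec:1d_spinless}).
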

\begin{proof}
The basis for spin-$1/2$ fermions spanning the Hilbert space can be written in a specific representation~\cite{PhysRevLett.62.1201} $|\alpha\rangle_{\uparrow}\otimes|\gamma\rangle_{\downarrow}$.
Expanding the canonical partition function like
\begin{equation}
    Z=\text{tr}\left( e^{-\beta H_{t}} \right)=\lim_{M\rightarrow\infty}\text{tr}\left[ V^{M}(\phi) \right],
    \label{eq:}
\end{equation}
and
\begin{equation}
    V(\phi)=1+\delta\sum_{\sigma}\left( \sum_{j=0}^{L-2}c_{j\sigma}^{\dagger}c_{j+1, \sigma}+e^{\text{i}\phi}c_{L-1, \sigma}^{\dagger}c_{0, \sigma}+h.c. \right),
    \label{eq:}
\end{equation}
where $\delta=\beta/M$.
We rewrite $V^{M}(\phi)=\sum_{\alpha}X^{\alpha}=\sum_{\alpha}\prod_{\sigma}X_{\sigma}^{\alpha}$ when we rearrange the operator string by their spin indices.
Then in this representation we have $\text{tr}\left( \prod_{\sigma}X_{\sigma} \right)=\text{tr}\left(X_{\uparrow}\right)\cdot\text{tr}(X_{\downarrow})$.
The lowest order nontrivial operator strings take the form as $\text{tr}\left( X_{\uparrow} \right)\cdot\text{tr}\left( \mathbbm{1}_{\downarrow} \right)+\text{tr}\left( \mathbbm{1}_{\uparrow} \right)\cdot\text{tr}\left( X_{\downarrow} \right)=(-)^{N_{\uparrow}-1}\delta^{L}e^{\text{i}\phi}D_{\downarrow}+(-)^{N_{\downarrow}-1}\delta^{L}e^{\text{i}\phi}D_{\uparrow}+h.c.=2(-)^{N_{\uparrow}-1}D_{\downarrow}\delta^{L}\cos\phi+2(-)^{N_{\downarrow}-1}D_{\uparrow}\delta^{L}\cos\phi$, where $D_{\uparrow, \downarrow}=C_{L}^{N_{\uparrow, \downarrow}}$ is the dimension of sub-Hilbert space corresponding to spin-$\uparrow$ and -$\downarrow$ fermions, respectively.
If $N_{\uparrow}$ and $N_{\downarrow}$ share the same parity, this very term maximizes as same as the free spinless fermions discussed in Appendix~\ref{sec:1d_spinless}.
The higher-order crossed term such as $+2\delta^{2L}\cos(2\phi)$ maximizes at the same time.
Once the partition function is maximized, free energy $F=-\frac{1}{\beta}\ln Z$ is minimized.
\end{proof}

If the parities of $N_{\uparrow}$ and $N_{\downarrow}$ are different, at finite temperature there will be some competition in nontrivial terms such as $\pm2(D_{\downarrow}-D_{\uparrow})\cos\phi$, which maximizes at $0$ or $\pi$ while the crossed term $-2\cos(2\phi)$ maximizes at $\pi/2$.
Therefore, to determine the optimal flux for the finite temperature free energy $F$ is hard.
It might differ from the ground state.

\subsection{Turn on Hubbard interactions}
\begin{figure}[!ht]
    \centering
    \includegraphics[width=0.85\textwidth]{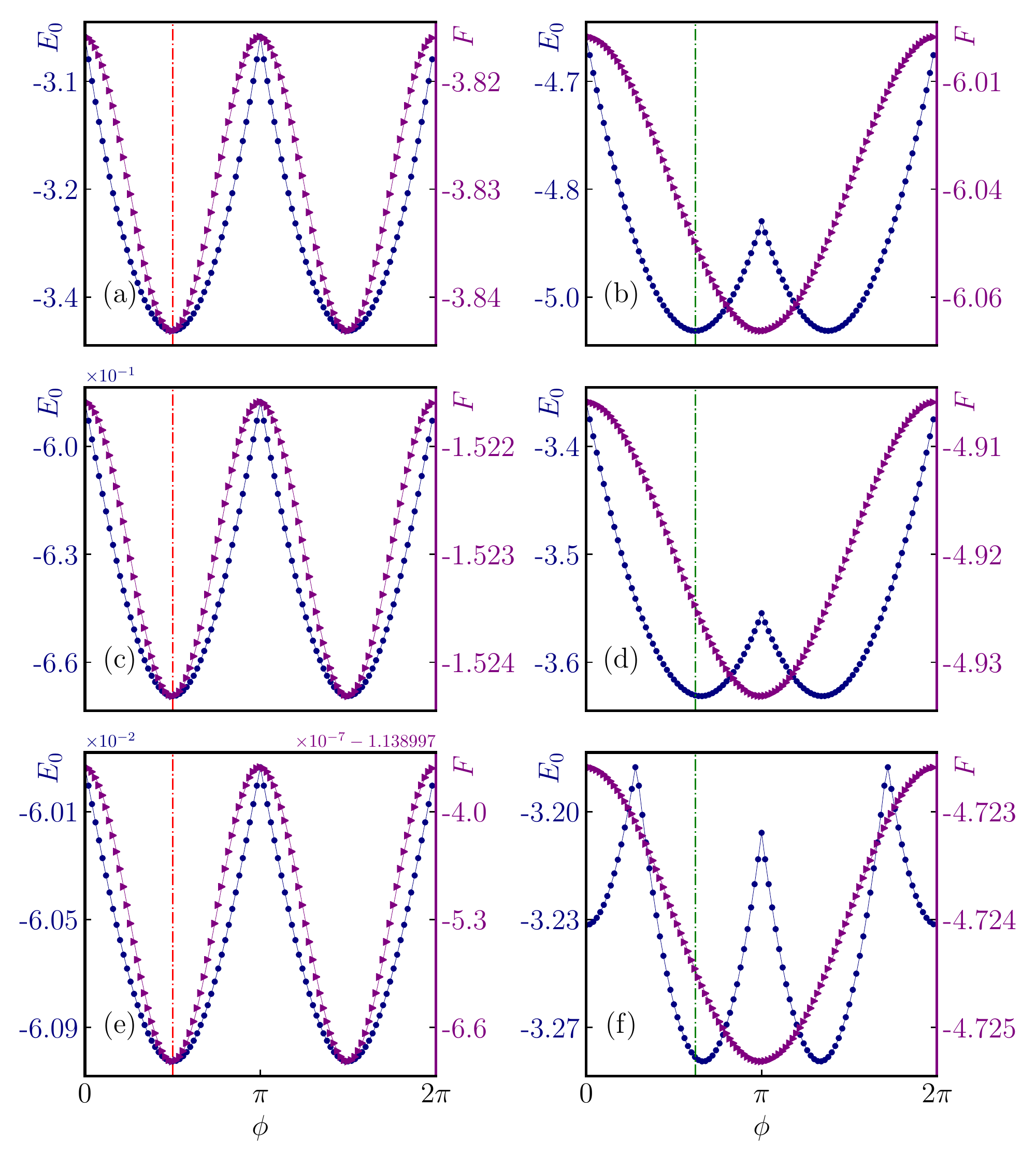}
\caption{Ground state energy $E_{0}$ and the finite temperature free energy $F$ of the 1D Hubbard model on a ring with fixed fillings $N_{\uparrow}=2, N_{\downarrow}=1$.
Vertically from top to bottom, (a, b), (c, d) and (e, f) denote $U/t=0.0, 10.0, 100.0$, respectively.
Horizontally from left to right, (a, c, e) and (b, d, f) denote the lattice size $L=3, 5$, respectively.
Free energy is computed at $\beta=1.0$.
The red dashed line marks the optimal flux $\phi_{\text{opt}}=\pi/2$ for the ground state energy at half-filling.
Green dashed line marks the optimal flux $\phi_{\text{opt}}\simeq 1.9536$ for the ground state energy away from half-filling.}
    \label{fig:flux_ring}
\end{figure}

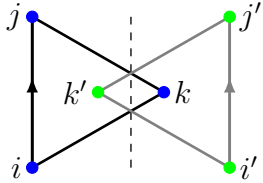
\begin{figure}[!ht]
    \centering
    \begin{tikzpicture}[x=1cm, y=1cm]
        \tikzset{
            arrow/.style = {
                decoration={
                    markings, 
                    mark=at position 0.6 with{\arrow{latex}}
                },
                postaction={decorate}
            }
        }
        \def\r{0.075cm};
        \def\l{2.0cm};
        \pgfmathsetmacro\de{0.0};
        \pgfmathsetmacro\xshift{1.5*sqrt(3)};
        \draw[dashed, line width=0.5pt] ({0.5*\xshift}, {0.0-\de}) -- ({0.5*\xshift}, {2.0+\de});
        \draw[line width=1pt] (0.0, 0.0) -- (0.0, \l) -- ({\l*cos(30)}, {\l*sin(30)}) -- (0.0, 0.0);
        \draw[arrow, line width=1pt] (0.0, 0.0)--(0.0, \l); 
        \draw[fill, color=blue] (0.0, 0.0) circle (\r);
        \node[anchor=east] at  (0.0, 0.0) {\large $i$};
        \draw[fill, color=blue] (0.0, \l) circle (\r);
        \node[anchor=east] at (0.0, \l) {\large $j$};
        \draw[fill, color=blue] ({\l*cos(30)}, {\l*sin(30)}) circle (\r);
        \node[anchor=west] at ({\l*cos(30)}, {\l*sin(30)}) {\large $k$};
        \draw[color=gray, line width=1pt] (\xshift, 0.0) -- (\xshift, \l) -- ({\xshift-sqrt(3)}, {\l*sin(30)}) -- (\xshift, 0.0);
        \draw[color=gray, arrow, line width=1pt] (\xshift, 0.0)--(\xshift, \l); 
        \draw[fill, color=green] (\xshift, 0.0) circle (\r);
        \node[anchor=west] at  (\xshift, 0.0) {\large $i^{\prime}$};
        \draw[fill, color=green] (\xshift, \l) circle (\r);
        \node[anchor=west] at (\xshift, \l) {\large $j^{\prime}$};
        \draw[fill, color=green] ({\xshift-sqrt(3)}, {\l*sin(30)}) circle (\r);
        \node[anchor=east] at ({\xshift-sqrt(3)}, {\l*sin(30)}) {\large $k^{\prime}$};
   \end{tikzpicture}
   \caption{
     $\{i, j, k\}$ represents a half-filled Hubbard model living on a $L=3$ ring lattice.
   $\{i^{\prime}, j^{\prime}, k^{\prime}\}$ is merely a fictitious reflection of $\{i, j, k\}$ along the dashed line.
   Arrows mark the flux accumulating directions.}
   \label{fig:double_triangle}
\end{figure}

When the simplest kind of on-site interaction 
\begin{equation}
    H_{U}=U\sum_{j}n_{j\uparrow}n_{j\downarrow},
    \label{eq:ham_hubbard}
\end{equation}
is turned on, we have the so-called Hubbard model~\cite{PhysRevLett.10.159, Hubbard1963}.
Its lattice Hamiltonian is given by $H=H_{t}+H_{U}$, where repulsive $U>0$ is always assumed throughout this paper.
The two branches of fermions gradually begin to interact and get entangled with each other as $U$ increases from zero.
On a bipartite ring lattice at half-filling, we can expect a four-fold degeneracy at most in terms of free spin-$1/2$ fermions, since every branch can contribute a two-fold degeneracy.
Recall that E. H. Lieb told us that any finite Hubbard $U$ can split this degeneracy thereby leave over a unique ground state~\cite{PhysRevLett.62.1201}.
For $N=N_{\uparrow}+N_{\downarrow}=2n, n\in\mathbbm{Z}_{+}$, the optimal flux for the ground state of the Hubbard model has been proved~\cite{nakano2000} to be $0$ or $\pi$ depending on the parity of $N/2$.
Thus, it is quite meaningful to ask, on non-bipartite lattices, how do the Hubbard interactions impact on comprehensive features of free fermions, including the optimal flux problem.

\subsubsection{A numerical example}
Above all, let us carry out some numerical experiments by exact diagonalization (ED) technique utilizing the \texttt{ARPACKPP} package~\cite{arpackpp}.
They are quite simple but very helpful to obtaining some basic intuitions.
Let us only consider three fermions $N_{\uparrow}=2, N_{\downarrow}=1$.
On one hand, $L=3$ means half-filling, as illustrated in Figure~\ref{fig:flux_ring}(a, c, e) with increasing $U/t=0.0, 10.0, 100.0$, both the ground state energy $E_{0}$ and finite temperature free energy $F$ always minimize at $\phi_{\text{opt}}=\pm\pi/2$.
Even very strong Hubbard interaction still does not affect the optimal flux value.

On the other hand, as we can see in Figure~\ref{fig:flux_ring}~(b, d, f), $L=5$ means it is not half-filled.
Firstly, the optimal flux for the ground state can be altered by the Hubbard interactions.
There does not exist a universal optimal flux for the ground state of the Hubbard model when it is not half-filled.
As $U/t$ increases, the optimal flux for the ground state gradually shifts from the free fermions' $\phi_{\text{opt}}\simeq 1.9536$, which is approximately given by Eq.~(\ref{eq:optimal_flux_filling21}), to $2\pi/3$.
In view of Eq.~(\ref{eq:optimal_flux_filling21}), it is interesting to realize that $2\pi/3$ is nothing but the optimal flux solution for the ground state of these free fermions when $L\rightarrow\infty$.
These two limits $U\rightarrow\infty$ (fixed finite $L$) and $L\rightarrow\infty$ ($U=0.0$) somehow arrive at the same optimal flux.
This implies, in a quantum many-body system strong interactions indeed can drive some emergent non-pertubative features which can not be understood by free or weakly interacting pictures.
Secondly,
the finite temperature free energy $F$ does not share the identical optimal flux with the ground state anymore.
Its optimal flux locates at $\phi=\pi$.

This numerical test, as well as the inspiration of Lemma~\ref{lem:free_spinful}, give us faith that the optimal fluxes $\phi_{\text{opt}}=\pm\pi/2$ always hold for the half-filled Hubbard model sitting on a non-bipartite odd-numbered ring.
Half-filling seems to be a special fixed point.
However, when it deviates from half-filling, it is much more complicated and there seemingly does not exist a universal conclusion.

\subsubsection{Generalized reflection positivity}
For the Hubbard interacting fermions, note that the RP technique only can be applied to a ring comprised of a even number of sites, hence resulting in the optimal flux of $0$ or $\pi$.
We succeeded in proving the following theorem with the aid of a \emph{generalized reflection positivity} (GRP) technique.
\begin{theorem}
    For a half-filled repulsive Hubbard model with a minimal $|S_{\text{tot}}^{z}|$ on a non-bipartite odd-numbered ring, at any finite temperature the optimal fluxes for its free energy $F$ are $\pm\pi/2$.
\end{theorem}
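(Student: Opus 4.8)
The plan is to turn the free-energy statement into a maximization of the partition function and then run the doubling construction of Figure~\ref{fig:double_triangle}. Since $F=-\beta^{-1}\ln Z(\phi)$, minimizing $F$ at fixed $\beta$ is equivalent to maximizing $Z(\phi)=\text{tr}\,e^{-\beta H}$. I would discretize imaginary time exactly as in the free-fermion finite-temperature lemma, $e^{-\beta H}=\lim_{M\to\infty}(e^{-\delta H_t}\,e^{-\delta H_U})^{M}$ with $\delta=\beta/M$, and expand the trace in the spin-separated occupation basis $|\alpha\rangle_{\uparrow}\otimes|\gamma\rangle_{\downarrow}$. Two structural facts make the interacting problem tractable: $H_U$ is diagonal in this basis and invariant under the spatial reflection, so it contributes only nonnegative, reflection-symmetric weights, while the entire $\phi$-dependence sits in the hopping factors supplied by $V(\phi)$. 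This is precisely why turning on $U>0$ cannot move the optimum.

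The core is the generalized reflection positivity. I would embed the physical odd ring together with a mirror copy into the doubled system of Figure~\ref{fig:double_triangle}, on which an honest spatial reflection $\theta$ acts by exchanging each site with its primed partner. For an even ring the reflection axis can be placed through two bonds and $\theta$ acts on the Aharonov--Bohm phase simply as $\phi\mapsto-\phi$, whose fixed points $0,\pi$ are the familiar Lieb--Nakano answers. For an odd ring this clean cut is impossible --- this is exactly the obstruction the word ``generalized'' refers to --- and doubling necessarily leaves one reflection-fixed bond that carries the fermionic sign $(-1)^{N_\sigma-1}$ obtained by transporting an odd number of fermions around the ring (the antiperiodic-versus-periodic effect of the Jordan--Wigner analysis in Appendix~\ref{sec:1d_spinless}). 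That sign is equivalent to threading a relative $\pi$-flux between the two copies, so here $\theta$ acts on the flux as $\phi\mapsto\pi-\phi$, whose unique fixed point is $\phi_\star=\pi/2$.

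With the Trotter weight written in the reflection-positive form $\sum_{\alpha}\overline{A_\alpha(\phi)}\,\theta\!\big(A_\alpha(\phi)\big)$, the standard reflection-positivity Cauchy--Schwarz inequality bounds every such expression by its value at the $\theta$-invariant flux, giving
\[
Z(\phi)\le Z\!\left(\tfrac{\pi}{2}\right)\qquad\text{for all }\phi,
\]
at each order in $\delta$ and hence after $M\to\infty$. Together with the elementary spectral symmetry $Z(\phi)=Z(-\phi)$ this shows that $Z$ is maximized, and $F$ minimized, exactly at $\phi=\pm\pi/2$ for every $\beta$ and every $U>0$. The half-filling and minimal-$|S_{\text{tot}}^{z}|$ hypotheses enter by fixing the sector $N_{\uparrow}=(L+1)/2$, $N_{\downarrow}=(L-1)/2$, which is what makes the reflection-positive channel the dominant contribution; away from it the competing $\cos\phi$ and $\cos2\phi$ terms identified after the free-fermion lemma no longer cooperate, in agreement with Figure~\ref{fig:flux_ring}.

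I expect the real difficulty to be the flux bookkeeping that produces the $\pi$-shift: one must verify, in a fixed gauge for $\phi$ on the doubled lattice, that the fermion sign from the odd site count combines with the reflected copy's flux so that the $\theta$-invariant configuration sits at $\pi/2$ rather than at $0$ or $\pi$. The secondary technical point is checking that the Trotter weight is genuinely of reflection-positive form term by term, so that the Cauchy--Schwarz step is legitimate before the continuum limit; this is routine once the doubling geometry and the placement of the fixed bond are pinned down.
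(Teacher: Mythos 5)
Your skeleton matches the paper's: trade minimizing $F$ for maximizing $Z$, double the odd ring with a fictitious mirror copy as in Figure~\ref{fig:double_triangle}, and argue that reflection positivity pins the maximizer at the fixed point of a flux map $\phi\mapsto\pi-\phi$, hence at $\pm\pi/2$. But the step you yourself flag as ``the real difficulty'' --- the bookkeeping that produces the $\pi$-shift --- is done with the wrong mechanism, and with that mechanism the argument fails. You attribute the shift to a reflection-fixed bond carrying the winding sign $(-1)^{N_{\sigma}-1}$ of the Jordan--Wigner analysis of Appendix~\ref{sec:1d_spinless}. Three concrete problems. First, in the doubled geometry there is no reflection-fixed bond: the axis-crossing links are exchanged in pairs, $(i,k)\leftrightarrow(i',k')$ and $(j,k)\leftrightarrow(j',k')$, and the flux-carrying links $(i,j)$, $(i',j')$ do not touch the axis at all. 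Second, the winding sign cannot supply a well-defined $\pi$-shift here, because at half filling on an odd ring with minimal $|S_{\text{tot}}^{z}|$ one has $N_{\uparrow}=(L+1)/2$, $N_{\downarrow}=(L-1)/2$, which have \emph{opposite} parities, so $(-1)^{N_{\uparrow}-1}=-(-1)^{N_{\downarrow}-1}$; the paper's own Trotter analysis of exactly these winding terms (the competition between $\pm 2(D_{\downarrow}-D_{\uparrow})\cos\phi$ and $-2\cos 2\phi$) concludes that this route is inconclusive, which is precisely why the GRP was introduced. Third, the sign that actually does the work is particle-number independent: it is the particle-hole component of $\Theta$, which maps each hopping $t_{ij}c_{i\sigma}^{\dagger}c_{j\sigma}+h.c.$ to $-t_{ij}(\cdots)$. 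The $\Theta$-symmetric doubled configuration therefore has $t_{i'j'}=-t_{ij}$, and identifying the fictitious mirror with the original ring --- which costs one extra complex conjugation because reflection reverses orientation --- forces $-t_{ij}^{*}=t_{ij}$, i.e.\ pure imaginary hopping, i.e.\ flux $\pm\pi/2$. Equivalently, each of the odd number $L$ of bonds acquires $\pi$ under particle-hole conjugation, so $\Theta$ sends flux $\phi$ to $L\pi-\phi\equiv\pi-\phi$: the oddness of the \emph{bond count}, not any fermion-number parity, is the source of the shift.

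A secondary overstatement: Cauchy--Schwarz does not ``directly'' give $Z(\phi)\le Z(\pi/2)$. Since the doubled system is a product of two decoupled rings, the naive RP inequality only yields $Z(\phi)\le Z(\pi-\phi)$, which is vacuous: $Z(\phi)=Z(\pi-\phi)$ already holds as an identity at half filling, because particle-hole conjugation maps an odd ring with flux $\phi$ to one with flux $\pi-\phi$. What pins the maximum is the equality clause --- the maximum of the doubled partition function is attained \emph{only} at $\Theta$-symmetric configurations --- applied to the configuration in which both copies carry a common optimal physical flux $\phi_{0}$ (such a configuration is automatically a global maximizer of the product). $\Theta$-symmetry of that maximizer reads $te^{-\text{i}\phi_{0}}=-te^{\text{i}\phi_{0}}$, whence $\phi_{0}=\pm\pi/2$. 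This is exactly the content of the paper's ``second complex conjugation'' step, and it is the piece missing from your write-up.
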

\begin{proof}
Here we take the simplest case to explain the GRP, as denoted by Figure~\ref{fig:double_triangle}, where a half-filled Hubbard model lives on a triangle $\{i, j, k\}$.
We would like to make a fictitious symmetric reflective copy of the system $\{i, j, k\}$ to $\{i^{\prime}, j^{\prime}, k^{\prime}\}$.
As same as discussed in Ref.~\cite{PhysRevLett.73.2158}, we are at liberty to choose the gauge as the flux is only added on the non-intersected links $(i, j)$ and $(i^{\prime}, j^{\prime})$.
The other intersected links can be set to be $t_{ik}=t_{jk}=t$ although generally they are not $\Theta$ invariant, in which $\Theta$ is comprised of three steps: geometric reflection $\mathcal{R}$ followed by a particle-hole transformation and a complex conjugation $\mathcal{C}$.
If we regard these six sites as a whole system, according to E. H. Lieb's theorem~\cite{PhysRevLett.73.2158} which is also reviewed in Appendix~\ref{app:reveew_RP}, the fulfillment of $\Theta(t_{ij}c_{i\sigma}^{\dagger}c_{j\sigma}+h.c.)=-t_{j^{\prime}i^{\prime}}c_{j^{\prime}\sigma}^{\dagger}c_{i^{\prime}\sigma}+h.c.=-\mathcal{R}(t_{ij}c_{i\sigma}^{\dagger}c_{j\sigma}+h.c.)$ leads to the maximum of the partition function of the Hubbard model.
Note that $\{i^{\prime}, j^{\prime}, k^{\prime}\}$ is merely a fictitious reflective image of the original system.
If we would like to separate the whole system and view them as two equivalent ones, we shall make another followed complex conjugation $t_{i^{\prime}j^{\prime}}=-t_{ij}\xrightarrow[]{\mathcal{C}}-t_{ij}^{*}\equiv t_{ij}$, which means $t_{ij}$ is pure imaginary thereupon a $\pi/2$ flux is threaded through the triangle.
Note that the second complex conjugation carried out here is to flip the flux direction of the reflective mirror system $\{i^{\prime}, j^{\prime}, k^{\prime}\}$ back as to match the original system $\{i, j, k\}$.
It is easy to utilize our GRP to other ring lattices with odd-numbered sites $L>3$ as every non-intersected link contributes a $\pm\pi/2$ gauge flux.
When the partition function is maximized, the free energy $F$ is minimized.
\end{proof}
In addition, we have some remarks.
\begin{enumerate}[i]
\item
We think that it is dangerous to rashly deduce the ground state properties as to let $\beta\rightarrow\infty$.
A possible example is illustrated in Figure~\ref{fig:flux_ring}(b, d, f) that the thermal free energy does not share the same optimal flux with the ground state when it deviates from half-filling.
There possibly exists a phase transition when the temperature jumps from any finite value to zero.
While for the half-filling, this somehow turns out to be safe such as the half-filled examples numerically shown in Figures~\ref{fig:flux_ring},~\ref{fig:flux_bowtie} and~\ref{fig:flux_kagome}.
There the ground state energy always shares the identical optimal flux with the corresponding finite temperature free energy, at least speaking in terms of these numerical samples.
We think it should have something to do with the coincidence implied by Lemma~\ref{lem:free_spinful}, which suggests that a half-filled non-bipartite odd-numbered ring happens to be an unrenormalized fixed point, where the optimal flux cannot be altered and renormalized by the Hubbard interactions any longer.
Let us take a closer look at the finite temperature partition function of 1D Hubbard model away from half-filling.
Still suppose the basis spanning the Hilbert space is written in the representation $|\alpha\rangle_{\uparrow}\otimes|\gamma\rangle_{\downarrow}$, similarly,
\begin{equation}
    Z=\text{tr}\left( e^{-\beta H} \right)=\lim_{M\rightarrow\infty}\text{tr}\left[ V^{M}(\phi) \right],
    \label{eq:}
\end{equation}
and
\begin{equation}
    \begin{aligned}
        V(\phi)=1+\delta\sum_{\sigma}&\left( \sum_{j=0}^{L-2}c_{j\sigma}^{\dagger}c_{j+1, \sigma}+e^{\text{i}\phi}c_{L-1, \sigma}^{\dagger}c_{0\sigma}+h.c.-\frac{U}{t}\sum_{j=0}^{L-1}n_{j\uparrow}n_{j\downarrow} \right).
    \end{aligned}
    \label{eq:}
\end{equation}
We write $V^{M}(\phi)=\sum_{\alpha}X^{\alpha}=\sum_{\alpha}\prod_{\sigma}X_{\sigma}^{\alpha}$ as we rearrange the operator string by their spin indices.
Then in Lieb's representation, we have $\text{tr}\left(X\right)=\text{tr}\left( \prod_{\sigma}X_{\sigma} \right)=\text{tr}\left(X_{\uparrow}\right)\cdot\text{tr}(X_{\downarrow})$.
The lowest order nontrivial term with interactions is $(-U/t)\delta\cdot\delta^{L}\cos\phi$ together with the purely kinetic contribution term $\delta^{L}\cos\phi$ we have $[1+(-U/t)\delta]\delta^{L}\cos\phi$, which maximizes as the same way as free fermions at any finite temperature.
That is, at finite temperature, the optimal flux for the free energy of the 1D Hubbard model should not be affected by any finite Hubbard interaction.
Numerical examples also confirm this as shown in Figure~\ref{fig:flux_ring}(b, d, f).
However, the optimal flux for the zero-temperature ground state energy is altered by increasing the Hubbard interactions when it is not half-filled.
Note that if $U\rightarrow\infty$, the above statement is not valid anymore since the sign of $[1+(-U/t)\delta]$ is not well-defined any longer.
One counter-example we have already know is the Nagaoka state~\cite{PhysRev.147.392} saying that Hubbard model with one fermion away from half-filling will fully polarize in the limit $U\rightarrow\infty$.
Long-range hoppings or 2D Hubbard model can induce the Nagaoka polarization with large but finite $U/t$~\cite{Farka2014}.

\item
Because of the GRP, the original system and the fictitious reflective system share the same flux to reach the maximum of the partition function at the same time.
Thus the flux period reduces from $2\pi$ to $\pi$.
Numerically we can also see this in Figure~\ref{fig:flux_ring}(a, c, e).
$\phi=\pm\pi/2$ both are optimal fluxes but have opposite chiralities.
The spin chiral order operator $\chi=\mathbf{\sigma}_{1}\cdot(\mathbf{\sigma}_{2}\times\mathbf{\sigma}_{3})$, which depends on the imaginary part of the gauge invariant Berry phases, will be not only nonzero but also maximized with respect to the optimal $\pm\pi/2$ fluxes.
At the same time, nonvanishing charge currents $J_{ij}=\text{i}\sum_{\sigma}(c_{i\sigma}^{\dagger}c_{j\sigma}-h.c.)$~\cite{PhysRevB.44.6909, PhysRevB.98.165102} should be observed around the loop lattice.
Both time-reversal symmetry $\mathcal{T}$ and parity symmetry $\mathcal{P}$ break spontaneously and chiral order emerges.
However, the combined $\mathcal{PT}$-symmetry is not broken, which illustrates a kind of nonrelativistic PT theorem if $\text{U}(1)$ symmetry is preserved~\cite{PhysRevB.93.094437}.

\item
This result reminds us the Haldane's honeycomb model~\cite{PhysRevLett.61.2015} where there is a chiral $\pm\pi/2$ flux threaded through each second-neighbor triangle.
It is energetically favored and time-reversal symmetry $\mathcal{T}$ and parity symmetry $\mathcal{P}$ are spontaneously broken.
\end{enumerate}

\subsection{Flux singularity as a Luttinger liquid signature}
Another important issue to be addressed here is that as shown in Figure~\ref{fig:flux_ring}, the ground state energies always exhibit a non-analytical singularity at $\phi=\pi$.
This can be understood easily at least in a free particle picture.
As shown in Figure~\ref{fig:1d_band}, in the momentum space since we have a discrete momentum step $\Delta k=2\pi/L$, when the threaded flux $\phi$ is approaching $\pi$, the momentum shift behaves like $\delta k=\frac{\phi}{L}\rightarrow \frac{\pi}{L}=\frac{1}{2}\Delta k$.
Therefore, there will be a sudden rearrangement of the particle fillings in terms of the two branches of fermions when $\phi$ steps across $\pi$.
The finite-size energy gap closes at the same time.
Note that this particular phenomenon stems from nothing but the very special dimensionality of 1D.
In 1D Fermi surface shrinks to two disconnected Fermi points, which essentially make 1D fermions a NFL.
As is well known, the low energy physics of 1D fermions is described by the Luttinger liquid (LL) theory~\cite{tomonaga1950, luttinger1963, haldane1981}, which crucially differs from higher dimensions.
There exist forbidden scattering regions and unrenormalizable gapless Fermi momenta~\cite{haldane1981, PhysRevLett.79.1110, Caux2017}.
In this sense, when we vary the threaded flux $\phi$, these singularities can be regarded as a kind of Luttinger-like NFL smoking-gun evidence~\cite{10.1146, RevModPhys.79.1015}.
As expected in LL, this flux singularity locating at $\pi$ is unchanged even involving Hubbard interactions.
When it is away from half-filling as shown in Figure~\ref{fig:flux_ring}(b, d, f), the flux singularity at $\phi=\pi$ always locates there.
Only its relative energy is changed by increasing Hubbard interactions.
Moreover, we find the more interesting thing is that, as shown in Figure~\ref{fig:flux_ring}(f), with very large $U/t=100.0$, sufficiently strong Hubbard interactions can lead to some emergent flux singularities.
We attribute them to the emergent Luttinger-like flux singularities driven by strong Hubbard interactions, which are absent in free or weakly interacting fermionic states.
These states also cannot be adiabatically connected to the non-interacting fermions.
As we can see, from the perspective of flux issue, doping a half-filled Mott insulator with sufficiently large Hubbard interactions turns out making the system dramatically different from the original parent~\cite{RevModPhys.78.17}.
These emergent flux singularities can only appear in doped cases away from half-filling.

\section{2D lattices}
\label{sec:2d}
Mostly our physical interests lie in 2D.
High-$T_{c}$ superconductivity generally is regarded as a 2D physical problem, where electrons are restricted within each single Cu-O plane~\cite{RevModPhys.66.763, RevModPhys.78.17}.
Free or weakly interacting fermions in 2D are described by Fermi-liquid theory, which is dramatically different from LL theory.
There is a continuous Fermi surface and no singular scattering occurs.
As a matter of fact, decades ago it has already been proposed by P. W. Anderson that strong interactions can lead to some Luttinger-like features in 2D Hubbard model stemming from the unrenormalizable quantum phase shift and singular scatterings~\cite{PhysRevLett.64.1839, PhysRevLett.65.2306}.
When interactions are increased to sufficiently strong, Fermi-liquid theory will break down and singular scatterings within the Brillouin zone may occur.

\subsection{A trial on a bowtie lattice}
\begin{figure}[!ht]
    \centering
        \tikzset{
            hexa/.style={shape=regular polygon, regular polygon sides=6, minimum size=2, line width=0.5pt, draw, anchor=south, shape border rotate=90}
        }
        \begin{tikzpicture}[x=0.75cm, y=0.75cm]
        \tikzset{
            arrow/.style = {
                decoration={
                    markings, 
                    mark=at position 0.6 with{\arrow{latex}}
                },
                postaction={decorate}
            }
        }
        \begin{scope}[shift={(-2.0, 0.0)}]
        \def\r{0.1};
        \def\l{1.5};
        \pgfmathsetmacro\de{0.0};
        \draw[dashed, line width=0.5pt] ({0.75*\l}, {-0.5*\l}) -- ({0.75*\l}, {sqrt(3)*\l+0.5*\l});
        \draw[line width=1pt] (0.0, 0.0) -- (\l, 0.0) -- ({\l*sin(30)}, {\l*cos(30)}) -- (\l, {2*\l*cos(30)}) -- (0.0, {2*\l*cos(30)}) -- ({\l*sin(30)}, {\l*cos(30)}) -- (0.0, 0.0);
        \draw[arrow, line width=0.5pt] (0.0, 0.0) -- ({\l*sin(30)}, {\l*cos(30)});
        \draw[arrow, line width=0.5pt] ({\l*sin(30)}, {\l*cos(30)}) -- (0.0, {2*\l*cos(30)});
        \draw[fill, color=blue] (0.0, 0.0) circle (\r);
        \draw[fill, color=blue] (\l, 0.0) circle (\r);
        \draw[fill, color=blue] ({\l*sin(30)}, {\l*cos(30)}) circle (\r);
        \draw[fill, color=blue] (\l, {2*\l*cos(30)}) circle (\r);
        \draw[fill, color=blue] (0.0, {2*\l*cos(30)}) circle (\r);
        \end{scope}
        \begin{scope}[shift={(-1.25, 0.0)}]
        \def\r{0.1};
        \def\l{1.5};
        \pgfmathsetmacro\de{0.0};
        \draw[color=gray, line width=1pt] (0.0, 0.0) -- (\l, 0.0) -- ({\l*sin(30)}, {\l*cos(30)}) -- (\l, {2*\l*cos(30)}) -- (0.0, {2*\l*cos(30)}) -- ({\l*sin(30)}, {\l*cos(30)}) -- (0.0, 0.0);
        \draw[color=gray, arrow, line width=0.5pt] (\l, 0.0) -- ({\l*sin(30)}, {\l*cos(30)});
        \draw[color=gray, arrow, line width=0.5pt] ({\l*sin(30)}, {\l*cos(30)}) -- (\l, {2*\l*cos(30)});
        \draw[fill, color=green] (0.0, 0.0) circle (\r);
        \draw[fill, color=green] (\l, 0.0) circle (\r);
        \draw[fill, color=green] ({\l*sin(30)}, {\l*cos(30)}) circle (\r);
        \draw[fill, color=green] (\l, {2*\l*cos(30)}) circle (\r);
        \draw[fill, color=green] (0.0, {2*\l*cos(30)}) circle (\r);
        \end{scope}
\end{tikzpicture}
\caption{Carrying out the GRP along the dashed line on a bowtie lattice.
Its fictitious reflection copy is illustrated by gray links.}
   \label{fig:bowtie}
\end{figure}
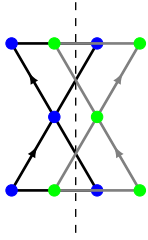

\begin{figure}[!ht]
    \centering
    \includegraphics[width=0.85\textwidth]{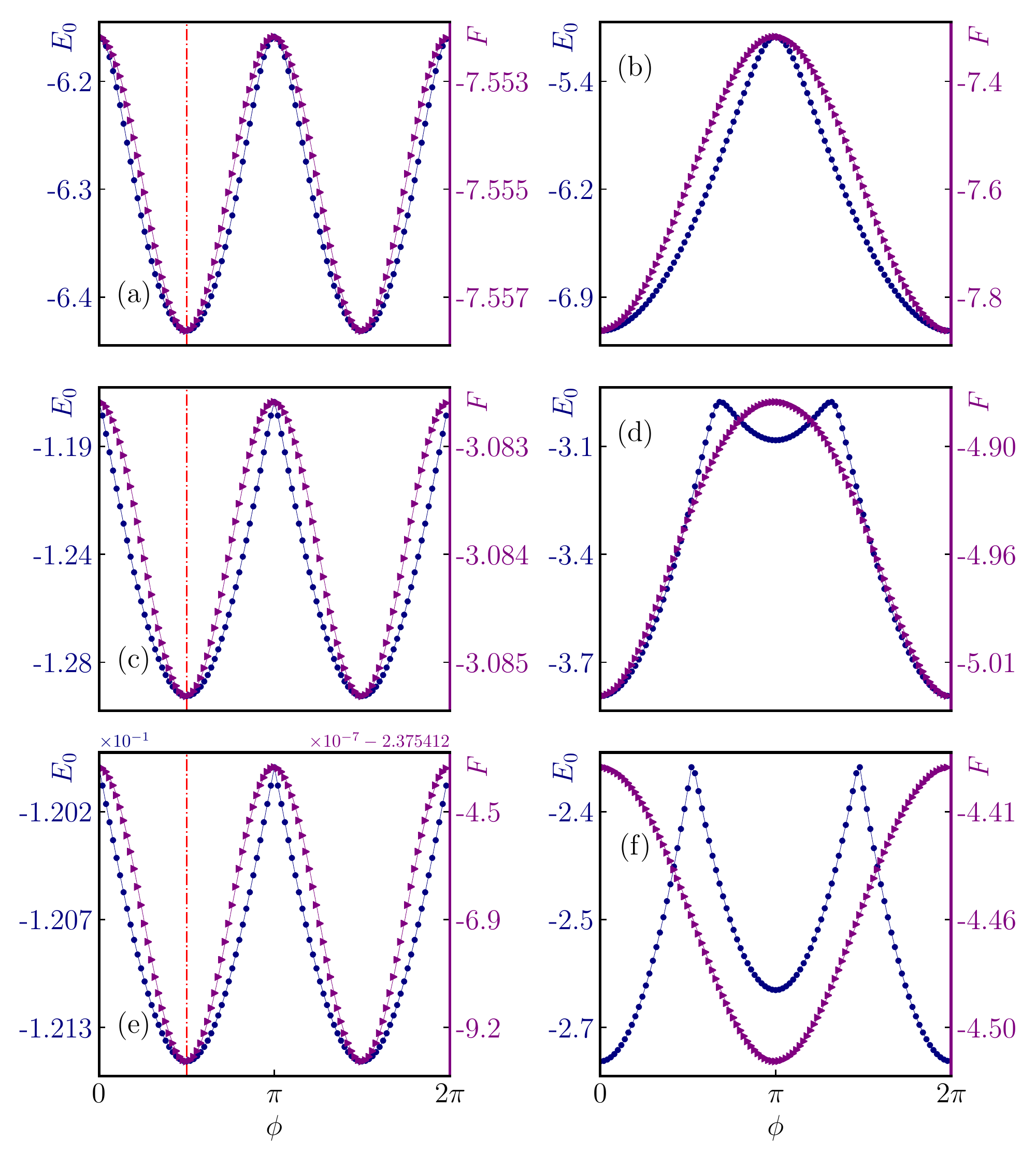}
    \caption{Ground state energy $E_{0}$ and the finite temperature free energy $F$ of the Hubbard model on a bowtie lattice. Vertically, (a, b), (c, d) and (e, f) denote $U/t=0.0, 10.0, 100.0$, respectively. Horizontally, (a, c, e) denote half-filling $N_{\uparrow}=3, N_{\downarrow}=2$. (b, d, f) denote filling $N_{\uparrow}=N_{\downarrow}=2$. Free energy is computed at $\beta=1.0$. Red dashed line marks the optimal flux $\phi=\pi/2$ for the ground state at half-filling.}
    \label{fig:flux_bowtie}
\end{figure}

In 2D, let us begin with a kind of very simple lattice, namely a bowtie lattice consisting of five sites as illustrated in Figure~\ref{fig:bowtie}.
As a generalization of the GRP on 1D rings, we have such a following corollary: 
\begin{corollary}
    For a half-filled repulsive Hubbard model with a minimal $|S_{\text{tot}}^{z}|$ defined on a bowtie as like Figure~\ref{fig:bowtie}, at any finite temperature, the optimal fluxes for the free energy $F$ are $\pm\pi/2$ in each triangle.
\end{corollary}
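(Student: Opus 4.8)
The plan is to adapt, essentially verbatim, the generalized reflection positivity (GRP) argument of the preceding theorem to the bowtie geometry of Figure~\ref{fig:bowtie}. The key structural observation is that a bowtie is just two triangles joined at a single apex (the center vertex), and that the two elementary plaquette fluxes---one through each triangle---are \emph{independent}, because the triangles share only a vertex and no common edge. The product of hoppings around the lower loop and around the upper loop can be fixed separately, so I would reduce the problem to two decoupled instances of the single-triangle situation already settled in the theorem, and conclude $\pm\pi/2$ for each.

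First I would fix a gauge, following Ref.~\cite{PhysRevLett.73.2158}, in which the flux of each triangle is carried entirely by one \emph{non-intersected} link---the slant links marked by the arrows in Figure~\ref{fig:bowtie}, whose endpoints lie on the same side of the dashed axis---while all remaining links, which cross the axis, carry real hopping $t$ (these need not be individually $\Theta$-invariant, exactly as in the theorem). I would then realize the fictitious reflective copy (the gray bowtie) across the dashed axis and regard the original together with its mirror image as a single reflection-symmetric system on which Lieb's theorem can act, with the anti-unitary operator $\Theta=\mathcal{C}\circ(\text{particle-hole})\circ\mathcal{R}$.

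Next I would invoke Lieb's reflection positivity theorem (reviewed in Appendix~\ref{app:reveew_RP}) on this composite system. As in the theorem, the partition function attains its maximum precisely when each flux-carrying hopping is $\Theta$-invariant, i.e. when $\Theta(t_{ij}c_{i\sigma}^{\dagger}c_{j\sigma}+h.c.)=-\mathcal{R}(t_{ij}c_{i\sigma}^{\dagger}c_{j\sigma}+h.c.)$; the subsequent complex conjugation that realigns the mirror flux with the original then forces that hopping to be pure imaginary, which is exactly a $\pm\pi/2$ flux threading the corresponding triangle. Applying this to both slant links yields $\pm\pi/2$ in each of the two triangles, and since $F=-\beta^{-1}\ln Z$, maximizing $Z$ immediately minimizes the free energy.

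The main obstacle I anticipate is the shared apex vertex, which sits at the junction of the two triangles and carries degree four. Unlike the clean three-site triangle, here I must verify that the splitting of the composite lattice into the two halves interchanged by $\mathcal{R}$ is consistent---that each triangle's flux-carrying slant link remains genuinely non-intersected while every hopping term touching the apex is either a crossing bond or a properly reflection-paired bond---so that the hypotheses of reflection positivity are not spoiled by the high connectivity at the center. Once this bookkeeping at the apex is checked, the two triangles decouple at the level of the GRP estimate and the $\pm\pi/2$ conclusion for each follows from the theorem.
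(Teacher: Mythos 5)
Your proposal is correct and takes essentially the same route as the paper: the paper's own proof consists of carrying out the GRP along the dashed line of Figure~\ref{fig:bowtie}, with the flux of each triangle gauged onto its non-intersected slant link and the four crossing links set to real $t$, then invoking the same $\Theta$-condition as in the ring theorem to force those two hoppings to be pure imaginary, i.e.\ $\pm\pi/2$ per triangle. Your extra bookkeeping at the apex is fine and raises no real obstacle --- the apex lies strictly on one side of the axis, its two crossing bonds simply join $H_I$ while its two in-region bonds are exactly the flux carriers --- and, like the paper, your argument correctly leaves the relative sign of the two triangle fluxes undetermined.
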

\begin{proof}
Carry on the GRP along the dashed line as illustrated in Figure~\ref{fig:bowtie}.
The other procedures are the same with the 1D ring.
\end{proof}
However, the GRP cannot tell us the sign of the fluxes in each triangle.
Due to the fact that the GRP only requires half-filled condition while does not care about whether $U=0.0$ or not, thus we can deduce the flux pattern for interacting fermions from free fermions at half-filling.
In another word, at half-filling, the (G)RP make a continuous connection between Hubbard interacting fermions and free fermions.
The optimal flux patterns are the same on both sides.

The optimal flux for free fermions on a bowtie at half-filling turns out sharing the identical sign of $\pm\pi/2$.
We verified this conclusion for interacting fermions by the numerical ED as shown in Figure~\ref{fig:flux_bowtie} and provide some more results deviating half-filling as a comparison.
We can see that on a 2D bowtie lattice, half-filling is still such a special filling that the free energy and ground state energy share the same optimal flux, which is immobile by changing the Hubbard interactions.
Otherwise not when it is away from half-filling.

\subsection{Triangular and Kagome lattices}
Now we turn to the flux problem on more complicated 2D non-bipartite lattices such as triangle and Kagome~\cite{JPSJ.83.034712, zengKagome, PhysRevB.90.081105} in terms of free as well as Hubbard interacting fermions.

\subsubsection{Half-filled free fermions coupled to a $\mathbbm{Z}_{4}$ gauge field}
We would like to consider the flux problem of half-filled free fermions both on the triangular and Kagome lattices in the first place.
According to our previous discussions, here $\pm\pi/2$ fluxes are also strongly implied in the plaquettes encircled by odd-numbered links.
Therefore, we would like to imagine that there is only a minimal $\mathbbm{Z}_{4}\subset \text{U}(1)$ gauge field coupled to these lattice fermions.
$\mathbbm{Z}_{4}$ is the smallest gauge group that possibly supports $\pm\pi/2$ fluxes.
A pure $\mathbbm{Z}_{4}$ lattice gauge theory should be taken the form as
\begin{equation}
    H_{g}=-\frac{J}{2}\sum_{p}\prod_{l\in p}\tau_{l}+h.c.,
\end{equation}
where $\tau_{l}\in G=\{1, \text{i}, -1, -\text{i}\}$ is the gauge connection living on the link $l$ of each plaquette $p$.
$G$ is our gauge group.
$J$ is the coupling constant.
Generically, this $\mathbbm{Z}_{4}$ gauge field can be coupled to lattice fermions through
\begin{equation}
    H_{f}=-\sum_{\sigma}\sum_{\langle{ij}\rangle}(\tau_{ij}c_{i\sigma}^{\dagger}c_{j\sigma}+h.c.)+H_{g}+H_{U}.
    \label{eq:ham_fermion_gauge}
\end{equation}
Here for the free fermions with $U=0.0$, we only consider the gauge field as a background, which means $J=0.0$.
The gauge field does not provide its own dynamics.
Given Eq.~(\ref{eq:ham_fermion_gauge}), exploring its whole phase diagram as well as phase transitions would be a quite interesting task, which, however, is deviating from the goal of this paper and left in the future.

Since the two branches $\sigma=\uparrow, \downarrow$ are decoupled and symmetric if we are on a lattice with even number of sites, we can only consider one of them and the Hamiltonian can be written as $H=\eta^{\dagger}\mathcal{H}\eta$ where $\eta=(c_{0}, c_{1}, \cdots, c_{N-1})^{T}$ assuming there are $N$ sites consisting of the lattice.
We should also assume that the magnetic unit cell can be only enlarged as much as $2\times 2$ larger than the original lattice unit cell.
The question remained is to find out which flux pattern is optimal to make this fermion-gauge field coupled system possessing the lowest ground state energy.

\begin{figure}[!ht]
    \centering
        \tikzset{
            hexa/.style={shape=regular polygon, regular polygon sides=6, minimum size=2, line width=0.5pt, draw, anchor=south, shape border rotate=90}
        }
        \begin{tikzpicture}[x=0.75cm, y=0.75cm]
        \tikzset{
              arrow/.style = {
                decoration={
                    markings, 
                    mark=at position 0.6 with{\arrow[black]{latex}}
                },
                postaction={decorate}
            }
        }
        \tikzset{
              arrow1/.style = {
                decoration={
                    markings, 
                    mark=at position 0.6 with{\arrow[red]{latex}}
                },
                postaction={decorate}
            }
        }
        
        \begin{scope}[shift={(0.0, 0.0)}]
        \def\r{0.1};
        \def\l{1.5};
        \pgfmathsetmacro\de{0.0};
        \draw[arrow, color=black, line width=1pt] (0.0, 0.0) -> (1.0*\l, 0.0);
        \draw[arrow, color=black, line width=1pt] (1.0*\l, 0.0) -- (2.0*\l, 0.0);
        \begin{scope}[shift={({2.0*\l*cos(60)}, {2.0*\l*sin(60)})}]
        \draw[arrow, color=black, line width=1pt] (0.0, 0.0) -> (1.0*\l, 0.0);
        \draw[arrow, color=black, line width=1pt] (1.0*\l, 0.0) -- (2.0*\l, 0.0);
        \end{scope}
        
        \draw[arrow, color=black, line width=1pt] (0.0, 0.0) -> ({1.0*\l*cos(60)}, {1.0*\l*sin(60)});
        \draw[arrow, color=black, line width=1pt] ({1.0*\l*cos(60)}, {1.0*\l*sin(60)}) -> ({2.0*\l*cos(60)}, {2.0*\l*sin(60)});
        \begin{scope}[shift={(2.0*\l, 0.0)}]
        \draw[arrow, color=black, line width=1pt] (0.0, 0.0) -> ({1.0*\l*cos(60)}, {1.0*\l*sin(60)});
        \draw[arrow, color=black, line width=1pt] ({1.0*\l*cos(60)}, {1.0*\l*sin(60)}) -> ({2.0*\l*cos(60)}, {2.0*\l*sin(60)});
        \end{scope}
        
        \begin{scope}[shift={({1.0*\l*cos(60)}, {1.0*\l*sin(60)})}]
        \draw[arrow, color=black, line width=1pt] (0.0, 0.0) -> (1.0*\l, 0.0);
        \draw[color=black, line width=1pt] (1.0*\l, 0.0) -- (2.0*\l, 0.0);
        \end{scope}
        
        \begin{scope}[shift={(1.0*\l, 0.0)}]
        \draw[color=black, line width=1pt] (0.0, 0.0) -> ({1.0*\l*cos(60)}, {1.0*\l*sin(60)});
        \draw[color=black, line width=1pt] ({1.0*\l*cos(60)}, {1.0*\l*sin(60)}) -> ({2.0*\l*cos(60)}, {2.0*\l*sin(60)});
        \end{scope}
        
        \draw[color=black, line width=1pt] ({1.0*\l*cos(60)}, {1.0*\l*sin(60)}) -- (1.0*\l, 0.0);
        \draw[color=black, line width=1pt] (2.0*\l, {2.0*\l*sin(60)}) -- ({2.0*\l+1.0*\l*cos(60)}, {1.0*\l*sin(60)});
        \draw[color=black, line width=1pt] (1.0*\l, {2.0*\l*sin(60)}) -- ({2.0*\l}, 0.0);
        
        \foreach \i in {0, 1, 2} {
            \pgfmathsetmacro\shift{1.0*\i*\l};
            \draw[fill, color=blue] ({0.0+\shift}, {0.0}) circle (\r);
        }
        \begin{scope}[shift={({1.0*\l*cos(60)}, {1.0*\l*sin(60)})}]
        \foreach \i in {0, 1, 2} {
            \pgfmathsetmacro\shift{1.0*\i*\l};
            \draw[fill, color=blue] ({0.0+\shift}, {0.0}) circle (\r);
        }
        \end{scope}
        \begin{scope}[shift={({2.0*\l*cos(60)}, {2.0*\l*sin(60)})}]
        \foreach \i in {0, 1, 2} {
            \pgfmathsetmacro\shift{1.0*\i*\l};
            \draw[fill, color=blue] ({0.0+\shift}, {0.0}) circle (\r);
        }
        \end{scope}
        
        \node[anchor=north] at ({0.5*\l}, {0.6*\l*sin(60)}) {$x$};
        \node[anchor=north] at ({1.0*\l}, {0.75*\l*sin(60)}) {$y$};
        \node[anchor=north] at ({\l}, -1.0) {(a)};
        \end{scope}
        
        \begin{scope}[shift={(5.0, 0.0)}]
        \draw[fill=gray!30, draw=gray!30] (0.5, {0.5*tan(30)}) ellipse (0.8 and 0.8);
        \def\r{0.1};
        \draw[fill, color=blue] (0.0, 0.0) circle (\r);
        \foreach \i in {0, 1, 2, 3} {
            \pgfmathsetmacro\shift{1.0*\i};
            \draw[arrow, color=black, line width=1pt] ({0.0+\shift}, 0.0) -- ({1.0+\shift}, 0.0);
        }
        \begin{scope}[shift={({2.0*cos(60)}, {2.0*sin(60)})}]
            \foreach \i in {0, 1, 2, 3} {
            \pgfmathsetmacro\shift{1.0*\i};
            \ifthenelse{\i<3}{\draw[arrow, color=black, line width=1pt] ({0.0+\shift}, 0.0) -- ({1.0+\shift}, 0.0);}{\draw[color=black, line width=1pt] ({0.0+\shift}, 0.0) -- ({1.0+\shift}, 0.0);}
        }
        \end{scope}
        \begin{scope}[shift={({4.0*cos(60)}, {4.0*sin(60)})}]
            \foreach \i in {0, 1, 2, 3} {
            \pgfmathsetmacro\shift{1.0*\i};
            \draw[arrow, color=black, line width=1pt] ({0.0+\shift}, 0.0) -- ({1.0+\shift}, 0.0);
        }
        \end{scope}
        
        \foreach \i in {0, 1, 2, 3} {
            \pgfmathsetmacro\xshift{1.0*\i*cos(60)};
            \pgfmathsetmacro\yshift{1.0*\i*sin(60)};
            \draw[arrow, color=black, line width=1pt] ({0.0+\xshift}, {0.0+\yshift}) -- ({1.0*cos(60)+\xshift}, {1.0*sin(60)+\yshift});
        }
        \begin{scope}[shift={(2.0, 0.0)}]
        \foreach \i in {0, 1, 2, 3} {
            \pgfmathsetmacro\xshift{1.0*\i*cos(60)};
            \pgfmathsetmacro\yshift{1.0*\i*sin(60)};
            \ifthenelse{\i=0 \OR \i=2}{\draw[arrow, color=black, line width=1pt] ({0.0+\xshift}, {0.0+\yshift}) -- ({1.0*cos(60)+\xshift}, {1.0*sin(60)+\yshift});}{\draw[color=black, line width=1pt] ({0.0+\xshift}, {0.0+\yshift}) -- ({1.0*cos(60)+\xshift}, {1.0*sin(60)+\yshift});}
        }
        \end{scope}
        \begin{scope}[shift={(4.0, 0.0)}]
        \foreach \i in {0, 1, 2, 3} {
            \pgfmathsetmacro\xshift{1.0*\i*cos(60)};
            \pgfmathsetmacro\yshift{1.0*\i*sin(60)};
            \draw[arrow, color=black, line width=1pt] ({0.0+\xshift}, {0.0+\yshift}) -- ({1.0*cos(60)+\xshift}, {1.0*sin(60)+\yshift});
        }
        \end{scope}
        
        \draw[color=black, line width=1pt]({1.0}, 0.0) -- ({1.0*cos(60)}, {1.0*sin(60)});
        \draw[color=black, line width=1pt]({1.0+3.0+3.0*cos(60)}, {0.0+3.0*sin(60)}) -- ({1.0*cos(60)+3.0+3.0*cos(60)}, {1.0*sin(60)+3.0*sin(60)});
        \draw[color=black, line width=1pt]({3.0}, {0.0}) -- ({3.0*cos(60)}, {3.0*sin(60)});
        \draw[color=black, line width=1pt]({3.0+1.0+1.0*cos(60)}, {0.0+1.0*sin(60)}) -- ({3.0*cos(60)+1.0+1.0*cos(60)}, {3.0*sin(60)+1.0*sin(60)});
        
        \foreach \i in {0, 1, 2, 3, 4} {
            \pgfmathsetmacro\shift{1.0*\i};
            \draw[fill, color=blue] ({0.0+\shift}, {0.0}) circle (\r);
        }
        \begin{scope}[shift={({2.0*cos(60)}, {2.0*sin(60)})}]
        \foreach \i in {0, 1, 2, 3, 4} {
            \pgfmathsetmacro\shift{1.0*\i};
            \draw[fill, color=blue] ({0.0+\shift}, {0.0}) circle (\r);
        }
        \end{scope}
        \begin{scope}[shift={({4.0*cos(60)}, {4.0*sin(60)})}]
        \foreach \i in {0, 1, 2, 3, 4} {
            \pgfmathsetmacro\shift{1.0*\i};
            \draw[fill, color=blue] ({0.0+\shift}, {0.0}) circle (\r);
        }
        \end{scope}
        
        \foreach \i in {1, 3} {
            \pgfmathsetmacro\xshift{1.0*\i*cos(60)};
            \pgfmathsetmacro\yshift{1.0*\i*sin(60)};
            \draw[fill, color=blue] ({0.0+\xshift}, {0.0+\yshift}) circle (\r);
        }
        \begin{scope}[shift={(2.0, 0.0)}]
        \foreach \i in {1, 3} {
            \pgfmathsetmacro\xshift{1.0*\i*cos(60)};
            \pgfmathsetmacro\yshift{1.0*\i*sin(60)};
            \draw[fill, color=blue] ({0.0+\xshift}, {0.0+\yshift}) circle (\r);
        }
        \end{scope}
        \begin{scope}[shift={(4.0, 0.0)}]
        \foreach \i in {1, 3} {
            \pgfmathsetmacro\xshift{1.0*\i*cos(60)};
            \pgfmathsetmacro\yshift{1.0*\i*sin(60)};
            \draw[fill, color=blue] ({0.0+\xshift}, {0.0+\yshift}) circle (\r);
        }
        \end{scope}
        
        \node[anchor=south] at (0.5, {0.1*sin(60)}) {$x$};
        \node[anchor=north] at (2.5, {1.9*sin(60)}) {$y$};
        \node[anchor=north] at (1.5, {1.2*sin(60)}) {$z$};
        \node[anchor=north] at (2.0, -1.0) {(b)};
    \end{scope}
\end{tikzpicture}
\caption{$2\times 2$ enlarged magnetic unit cell with respect to fixed gauge choices. An arrow denotes a gauge connection $\exp(+\text{i}\pi/2)$ fixed on this very link. (a) Triangular lattice with a flux state labelled by $[x, y]$. (b) Kagome lattice with a flux state labelled by $[x, y, z]$.}
   \label{fig:freeTriangleKagome}
\end{figure}
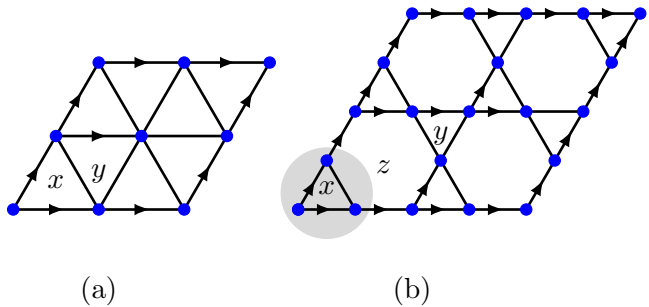

For the triangular lattice as shown in Figure~\ref{fig:freeTriangleKagome}(a), within the enlarged magnetic unit cell, we have a specific fixed gauge choice as the arrows indicate then leave the other $7$ links free.
The undetermined gauge field connections without arrows can be chosen from the $\mathbbm{Z}_{4}$ gauge group at liberty.
In this sense, there are $\mathcal{N}=4^{7}=16,384$ kinds of choices in total.
We searched all these possibilities numerically on a $16\times 16$ triangular lattice with periodic boundary conditions (PBC).
$\pm[\pi/2, \pi/2]$ turn out to be the optimal flux patterns.
For the Kagome lattice as shown in Figure~\ref{fig:freeTriangleKagome}(b), within the enlarged magnetic unit cell, there are $11$ free links determining various flux patterns.
We searched all the $\mathcal{N}=4^{11}=4,194,304$ kinds of possibilities numerically on a $4\times 4\times 3$ Kagome lattice with PBC.
$\pm[\pi/2, \pi/2, 0]$ are the two optimal flux patterns for the Kagome lattice.
Some representative flux states and their energy are enumerated in TABLE.~\ref{tab:freeTriangleKagome}.
In Figure~\ref{fig:finite_size} we also show a finite-size effect of the energy for these flux states.
On both triangular and Kagome lattices,  we can see that every rhomboid still prefers a $\pi$ flux while a $0$ flux gives much higher energy.
Moreover, these optimal flux states converge more quickly.
It seems that they are much less sensitive to the finite-size effect in comparison with other flux states.

\begin{figure}[!ht]
    \centering
    \includegraphics[width=0.85\textwidth]{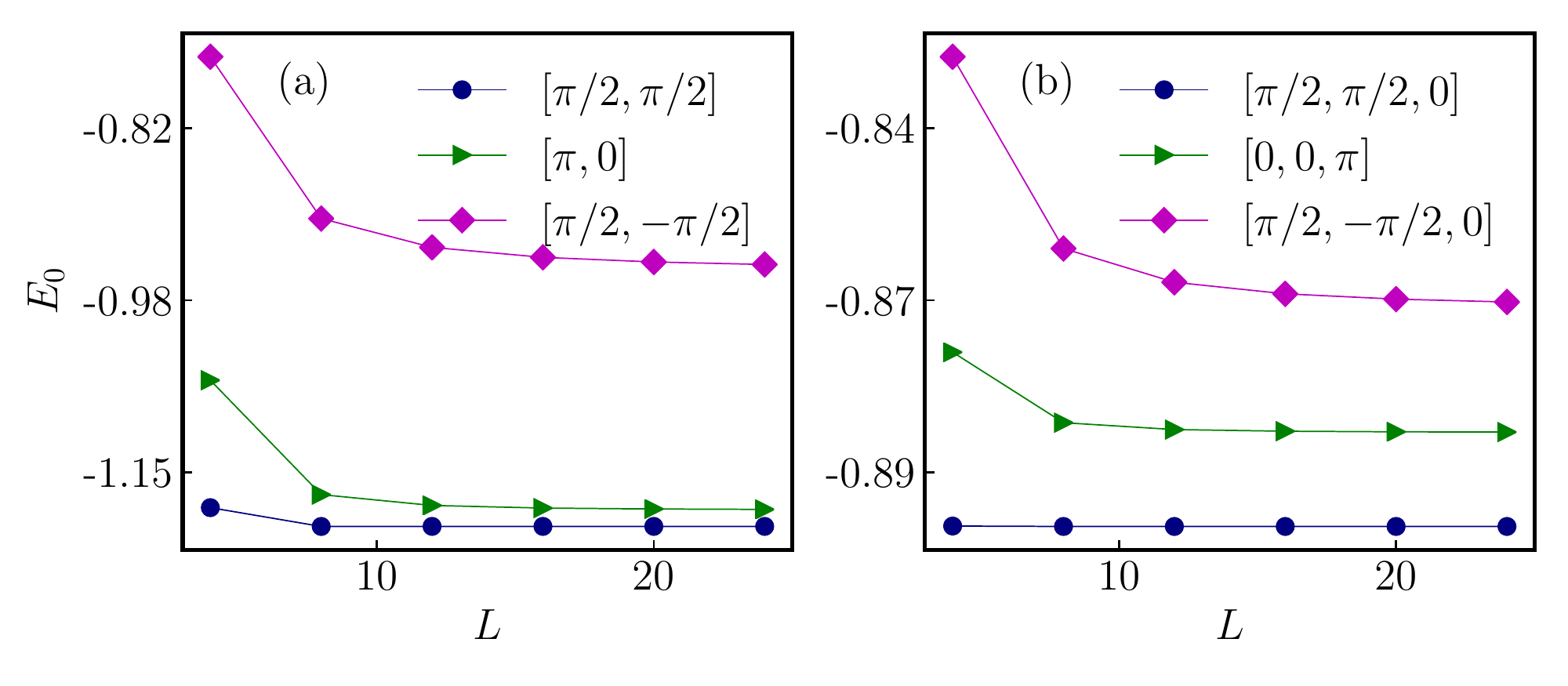}
    \caption{Finite-size scaling of the ground state energy for kinds of flux states on $L\times L$ lattices with PBC. (a) Triangular lattice. (b) Kagome lattice.}
    \label{fig:finite_size}
\end{figure}

\begin{table}
\caption{\label{tab:freeTriangleKagome} Per-site energy of some flux states for half-filled free fermions on a $16\times 16$ triangular lattice and a $8\times 8\times 3$ Kagome lattice with PBC}
\begin{tabular}{p{.2\textwidth}p{.2\textwidth}p{.2\textwidth}p{.2\textwidth}}
\hline
\hline
\multicolumn{2}{c}{Triangle} & \multicolumn{2}{c}{Kagome} \\
\hline
  $[+\pi/2, +\pi/2]$ & $-1.20102$ & $[+\pi/2, +\pi/2, 0]$ & -0.89912 \\
  $[\pi, 0]$ & $-1.18341$ & $[0, 0, \pi]$ & -0.88413 \\
  $[+\pi/2, -\pi/2]$ & $-0.94263$ & $[+\pi/2, -\pi/2, 0]$ & -0.85897 \\
\hline
\hline
\end{tabular}
\end{table}

\subsubsection{Turn on Hubbard interactions}
\begin{figure}[!ht]
    \centering
        \tikzset{
            hexa/.style={shape=regular polygon, regular polygon sides=6, minimum size=2, line width=0.5pt, draw, anchor=south, shape border rotate=90}
        }
        \begin{tikzpicture}[x=0.75cm, y=0.75cm]
        \tikzset{
            arrow/.style = {
                decoration={
                    markings, 
                    mark=at position 0.6 with{\arrow{latex}}
                },
                postaction={decorate}
            }
        }

        \begin{scope}[shift={(0.0, 0.0)}]
        \def\r{0.1};
        \def\l{1.25};
        \pgfmathsetmacro\de{0.0};
        \draw[dashed, line width=0.5pt] ({0.5*sin(60)*\l}, -0.75) -- ({0.5*sin(60)*\l}, {3.0*\l});
        \draw[color=black, line width=1pt] (0.0, 0.0) -- (0.0, 2.0*\l) -- ({sin(60)*\l}, {2.5*\l}) -- ({sin(60)*\l}, {0.5*\l}) -- (0.0, 0.0);
        \draw[color=black, line width=1pt] (0.0, {1.0*\l}) -- ({sin(60)*\l}, {0.5*\l});
        \draw[color=black, line width=1pt] (0.0, {2.0*\l}) -- ({sin(60)*\l}, {1.5*\l});
        \draw[color=black, line width=1pt] (0.0, {1.0*\l}) -- ({sin(60)*\l}, {1.5*\l});
        \draw[fill, color=blue] (0.0, 0.0) circle (\r);
        \draw[fill, color=blue] (0.0, {1.0*\l}) circle (\r);
        \draw[fill, color=blue] (0.0, {2.0*\l}) circle (\r);
        \draw[fill, color=blue] ({sin(60)*\l}, {2.5*\l}) circle (\r);
        \draw[fill, color=blue] ({sin(60)*\l}, {1.5*\l}) circle (\r);
        \draw[fill, color=blue] ({sin(60)*\l}, {0.5*\l}) circle (\r);
        \node[anchor=north] at ({0.5*sin(60)*\l}, -1.0) {(a)};
        \end{scope}
 
        \begin{scope}[shift={(4.0, 0.0)}]
        \draw[dashed, line width=0.5pt] (2.25, {-0.5*sqrt(3.0)}) -- (2.25, {2.0*sqrt(3.0)});
        \draw[fill=gray!30, draw=gray!30] (0.5, {0.5*tan(30)}) ellipse (0.8 and 0.8);
        \def\r{0.1};
        \foreach \j in {0, 1} { 
            \foreach \i in {0, 1} {
                \pgfmathsetmacro\xshift{2.0*\i+\j};
                \pgfmathsetmacro\yshift{sqrt(3.0)*\j};
                \draw[line width=1pt] (0.0+\xshift, 0.0+\yshift) -- (1.0+\xshift, 0.0+\yshift);
                \draw[line width=1pt] (1.0+\xshift, 0.0+\yshift) -- (0.5+\xshift, {0.5*sqrt(3.0)+\yshift});
                \draw[line width=1pt] (0.5+\xshift, {0.5*sqrt(3.0)+\yshift}) -- (0.0+\xshift, 0.0+\yshift);

                \ifthenelse{\i<1}{\draw[line width=1pt] (1.0+\xshift, 0.0+\yshift) -- (2.0+\xshift, 0.0+\yshift);}{}
                \ifthenelse{\j<1}{\draw[line width=1pt] (0.5+\xshift, {0.5*sqrt(3.0)+\yshift}) -- (1.0+\xshift, {sqrt(3.0)+\yshift});}{}
                \ifthenelse{\i>0}{\ifthenelse{\j<1}{\draw[line width=1pt] (0.5+\xshift, {0.5*sqrt(3.0)+\yshift}) -- (0.0+\xshift, {sqrt(3.0)+\yshift});}{}}{}

                \draw[fill, color=blue] (0.0+\xshift, 0.0+\yshift) circle (\r); 
                \draw[fill, color=blue] (1.0+\xshift, 0.0+\yshift) circle (\r);
                \draw[fill, color=blue] (0.5+\xshift, {0.5*sqrt(3.0)+\yshift}) circle (\r);
            }
        }
        \node[anchor=north] at (2.0, -1.0) {(b)};
    \end{scope}
\end{tikzpicture}
\caption{Carrying out the GRP on a (a) triangular lattice and (b) a Kagome lattice.
Within a lattice unit cell (shadowed area) of Kagome lattice, there are three inequivalent sites.}
   \label{fig:bowtieKagome}
\end{figure}
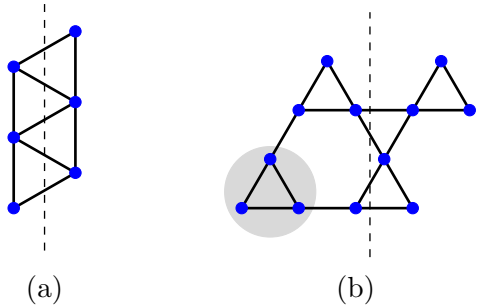
As a generalization of 1D rings and 2D bowtie lattice, we can carry out the GRP along the dashed lines on these kinds of lattices as shown in Figure~\ref{fig:bowtieKagome}.
Note that the direct application of our GRP is that, every nonintersected link will acquire a pure imaginary gauge connection in order to maximize the partition function.
In this sense, we can have the two following corollaries:
\begin{corollary}
    For a half-filled repulsive Hubbard model with a minimal $|S_{\text{tot}}^{z}|$ defined on a triangular lattice, the optimal fluxes for its free energy $F$ at any finite temperature are $\pm\pi/2$ in each triangle. 
\end{corollary}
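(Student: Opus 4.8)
The plan is to run the generalized reflection positivity (GRP) argument of the odd-ring theorem and the bowtie corollary essentially verbatim, with the reflection taken across the dashed line of Figure~\ref{fig:bowtieKagome}(a). First I would fix that axis so that the geometric reflection $\mathcal{R}$ exchanges the two halves of the lattice as mirror images, and then split the bonds into two classes: the \emph{intersected} bonds that straddle the axis, and the \emph{non-intersected} bonds lying entirely within one half (here the bonds parallel to the axis). As in the earlier proofs I would choose a gauge that places all of the flux on the non-intersected bonds, setting every intersected bond to the real value $t_{ij}=t$.

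Second, I would impose the GRP matching condition $\Theta(t_{ij}c_{i\sigma}^{\dagger}c_{j\sigma}+h.c.)=-\mathcal{R}(t_{ij}c_{i\sigma}^{\dagger}c_{j\sigma}+h.c.)$, where $\Theta$ combines $\mathcal{R}$ with a particle-hole transformation and the complex conjugation $\mathcal{C}$. Exactly as for the single triangle in Figure~\ref{fig:double_triangle}, the extra conjugation needed to flip the mirrored flux back onto the original forces $t_{ij}=-t_{ij}^{*}$ on each non-intersected bond, i.e. the hopping on it must be pure imaginary. Because each elementary triangle of the lattice is built from exactly one non-intersected bond together with two intersected (now real) bonds, the Berry phase around every triangle is $\pm\pi/2$. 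By E. H. Lieb's reflection-positivity theorem (Appendix~\ref{app:reveew_RP}), this $\Theta$-invariant, half-filled, minimal-$|S_{\text{tot}}^{z}|$ configuration maximizes the partition function $Z$ for any repulsive $U$ and hence minimizes $F=-\beta^{-1}\ln Z$, which is the claim.

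The step I expect to be the main obstacle is upgrading the single-axis reflection into a statement about \emph{every} triangle of the 2D lattice. A single dashed axis only directly constrains the triangles adjacent to it, whereas the triangular lattice carries two triangles per unit cell and non-intersected bonds along a whole family of parallel lines. To close this gap I would iterate the reflection across all parallel axes — the standard repeated-reflection mechanism of RP — and verify that the resulting pure-imaginary assignments are mutually compatible under periodic boundary conditions and fit inside the at-most-$2\times2$ enlarged magnetic unit cell. Note that GRP by itself fixes only the magnitude $|\phi|=\pi/2$ and leaves the relative signs open (hence the $\pm$ in the statement); pinning down the actual pattern to $\pm[\pi/2,\pi/2]$ would additionally require the free-fermion energy comparison together with the (G)RP continuity that links the free and interacting optima at half-filling.
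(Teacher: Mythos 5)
Your proposal follows essentially the same route as the paper: the paper's own (very brief) proof likewise carries out the GRP across the dashed line of Figure~\ref{fig:bowtieKagome}(a), noting that every non-intersected link must acquire a pure imaginary gauge connection to maximize $Z$, so that each triangle (one such link plus two real intersected links) carries flux $\pm\pi/2$, and minimizing $F$ follows from maximizing $Z$. Your closing caveats are also consistent with the paper, which itself concedes that GRP alone cannot fix the relative signs of the triangle fluxes and that pinning down $\pm[\pi/2,\pi/2]$ requires the separate free-fermion comparison at half-filling.
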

\begin{corollary}
    For a half-filled repulsive Hubbard model with a minimal $|S_{\text{tot}}^{z}|$ defined on the Kagome lattice, the optimal flux patterns for its free energy $F$ at any finite temperature are $\pm\pi/2$ in each triangle and $0$ or $\pi$ in each hexagon.
\end{corollary}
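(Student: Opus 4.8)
The plan is to extend the GRP scaffolding already used for the odd ring and for the triangular-lattice corollary, now accounting for the fact that the Kagome lattice carries two inequivalent elementary loops, the corner-sharing triangles and the hexagons. First I would take the full Hubbard model in Lieb's representation $|\alpha\rangle_{\uparrow}\otimes|\gamma\rangle_{\downarrow}$ and fix a reflection axis as the dashed line in Figure~\ref{fig:bowtieKagome}(b), splitting the bonds into those crossing the axis (\emph{intersected}) and those that do not (\emph{non-intersected}). Following the gauge freedom exploited in the Theorem and reviewed in Appendix~\ref{app:reveew_RP}, I would push all flux onto the non-intersected links, set each intersected link to real $t$, and introduce $\Theta=\mathcal{C}\circ(\text{particle-hole})\circ\mathcal{R}$. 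Reflection positivity then bounds the partition function from above by its value on the $\Theta$-symmetric configuration, and the GRP condition $\Theta(t_{ij}c_{i\sigma}^{\dagger}c_{j\sigma}+h.c.)=-\mathcal{R}(t_{ij}c_{i\sigma}^{\dagger}c_{j\sigma}+h.c.)$ is met exactly when every non-intersected hopping is pure imaginary, i.e.\ carries phase $\pm\pi/2$; the second complex conjugation flips the mirror copy's flux so that it matches the original, precisely as in the odd-ring proof.

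Next I would sweep the reflection axis over the three bond orientations of the Kagome lattice, using its mirror symmetries, so that every link is non-intersected for at least one admissible axis and is therefore forced to be $\pm\mathrm{i}t$. With all links imaginary, the two flux statements follow by elementary bookkeeping around the loops. Each triangle encloses three such links, so its flux is a sum of three terms each equal to $\pm\pi/2$; reduced modulo $2\pi$ this is necessarily $\pm\pi/2$, reproducing the odd-plaquette result of the triangular corollary. Each hexagon encloses six such links, so its flux equals $\tfrac{\pi}{2}(n_{+}-n_{-})$ with $n_{+}+n_{-}=6$; since $n_{+}-n_{-}$ is then even, the hexagonal flux is an integer multiple of $\pi$, namely $0$ or $\pi$. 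As in all the preceding statements, the GRP fixes only the magnitude and not the sign, consistent with the two degenerate chiral patterns. Once $Z$ is maximized, $F=-\beta^{-1}\ln Z$ is minimized, which yields the claimed optimal pattern.

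The step I expect to be the main obstacle is the consistency of this multi-axis sweep, since a bond that is non-intersected (and hence required imaginary) for one axis is an intersected bond (set real) for another, so the several necessary conditions must be shown to be jointly satisfiable rather than contradictory. Concretely, I would check that each reflection yields only a \emph{necessary} upper bound on the \emph{same} $Z$, and that the fully imaginary configuration simultaneously saturates every such bound by being $\Theta$-symmetric under each chosen $\mathcal{R}$ after the appropriate gauge alignment. A related point requiring care is that the hexagon, being an even loop, is never itself directly pinned by a single GRP reflection—the argument constrains only the odd triangular loops through their non-intersected links—so its $0$-or-$\pi$ flux must be established purely as a parity consequence of all bonds being imaginary, and I would verify that no reflection in the family imposes a stronger or incompatible constraint on the hexagonal plaquettes.
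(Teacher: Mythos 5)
Your proposal is correct and follows essentially the same route as the paper: the paper's entire justification is to carry out the GRP along the reflection axis of Figure~\ref{fig:bowtieKagome}(b), observe that every non-intersected link must acquire a pure imaginary gauge connection to maximize $Z$, and then read off the fluxes by parity --- three imaginary links around a triangle give $\pm\pi/2$ modulo $2\pi$, six around a hexagon give $0$ or $\pi$ --- exactly your bookkeeping. Your additional worry about jointly satisfying the constraints from several reflection axes (and your observation that the hexagon flux is pinned only as a parity consequence, never directly) is a point the paper glosses over entirely with the single phrase ``every nonintersected link will acquire a pure imaginary gauge connection,'' so your treatment is, if anything, more careful than the paper's own.
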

On one hand, if only based on the GRP, analytically we still cannot nail down the sign of the triangle flux $\pm\pi/2$ for the half-filled Hubbard interacting fermions on the triangular and Kagome lattices.
Although E. H. Lieb's original result~\cite{lieb1993} implies that, every rhomboid of a triangular lattice should prefer a $\pi$ flux rather than $0$, the RP cannot be directly applied here since we have an effective next nearest neighbor hopping for each rhomboid.
As we have mentioned, at the particular half-filling, the (G)RP can make a bridge between free and Hubbard interacting fermions.
It doesn't care whether Hubbard $U$ is zero or not.
On the other hand, our numerical study of free fermions coupled to a static $\mathbbm{Z}_{4}$ gauge field on the triangular and Kagome lattices has shown the optimal flux patterns, but we still cannot state them as proof since we have imposed the magnetic unit cell assumption as well as a finite lattice size.
Combined these two hands, at least we can strongly believe that the optimal flux patterns for repulsive Hubbard model at half-filling are $\pm[\pi/2, \pi/2]$ and $\pm[\pi/2, \pi/2, 0]$ on the triangular and Kagome lattices, respectively.

For the triangular lattice, we have another argument:
we can carry out the ordinary RP along the dashed line as illustrated in Figure~\ref{fig:bowtieKagome} with a \emph{glide geometric reflection} $\widetilde{\mathcal{R}}$ instead of a direct geometric reflection $\mathcal{R}$, saying that the every rhomboid of the triangular lattice prefers a $\pi$ flux, which coincides with the optimal flux patterns $\pm[\pi/2, \pi/2]$.

\subsection{Numerical verification and flux singularities induced by strong Hubbard interactions}
\begin{figure}[!ht]
    \centering
    \includegraphics[width=0.85\textwidth]{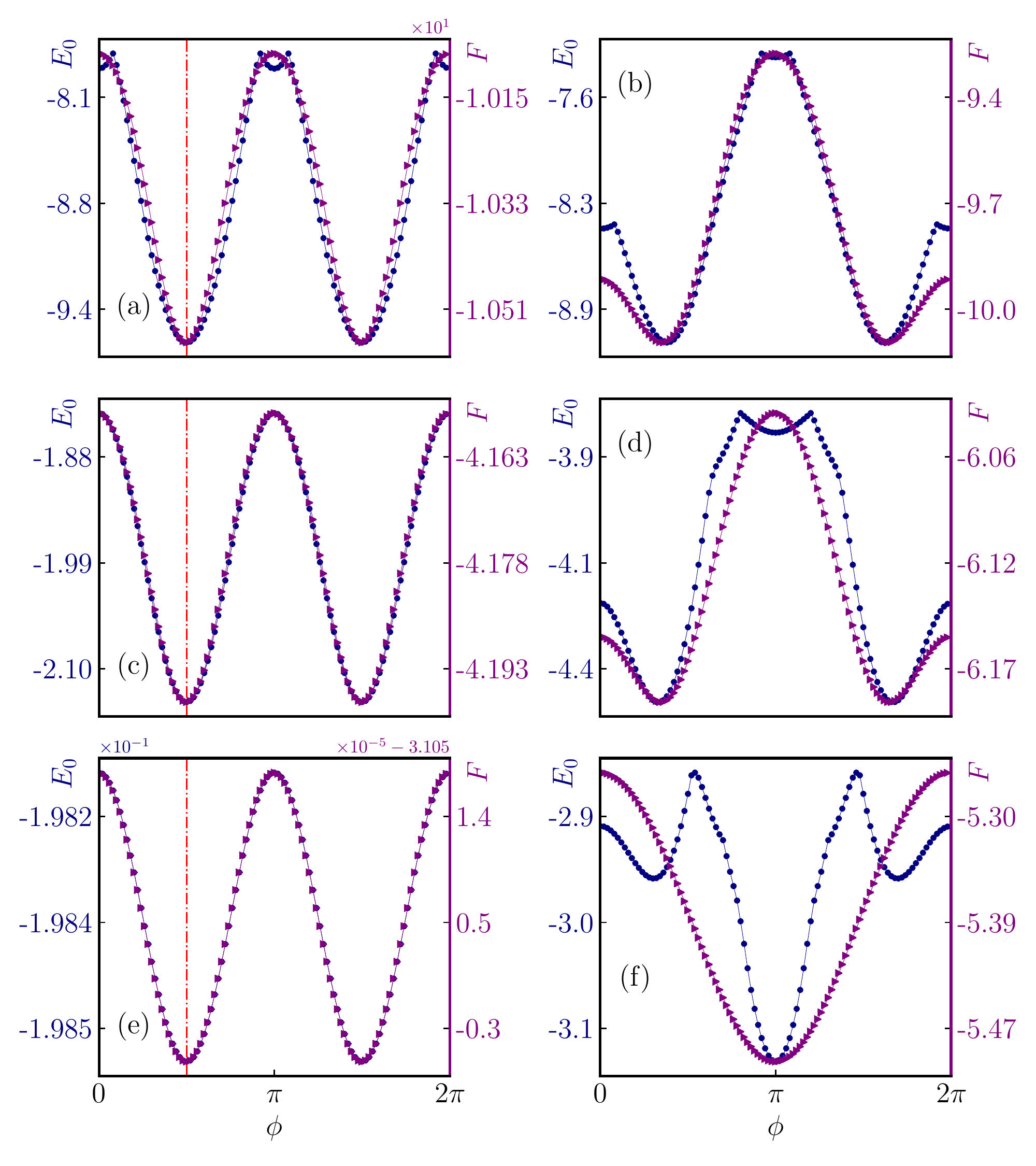}
    \caption{Ground state energy $E_{0}$ and finite temperature free energy $F$ of the Hubbard model on a triangular lattice consisting of $6$ sites. Vertically, (a, b), (c, d) and (e, f) denote $U/t=0.0, 10.0, 100.0$, respectively. Horizontally, (a, c, e) denote half-filling $N_{\uparrow}=N_{\downarrow}=3$. (b, d, f) denote filling $N_{\uparrow}=3, N_{\downarrow}=2$. Free energy is computed at $\beta=1.0$. The red dashed line marks the optimal flux $\phi=\pi/2$ for the ground state at half-filling.}
    \label{fig:flux_triangle}
\end{figure}
\begin{figure}[!ht]
    \centering
    \includegraphics[width=0.85\textwidth]{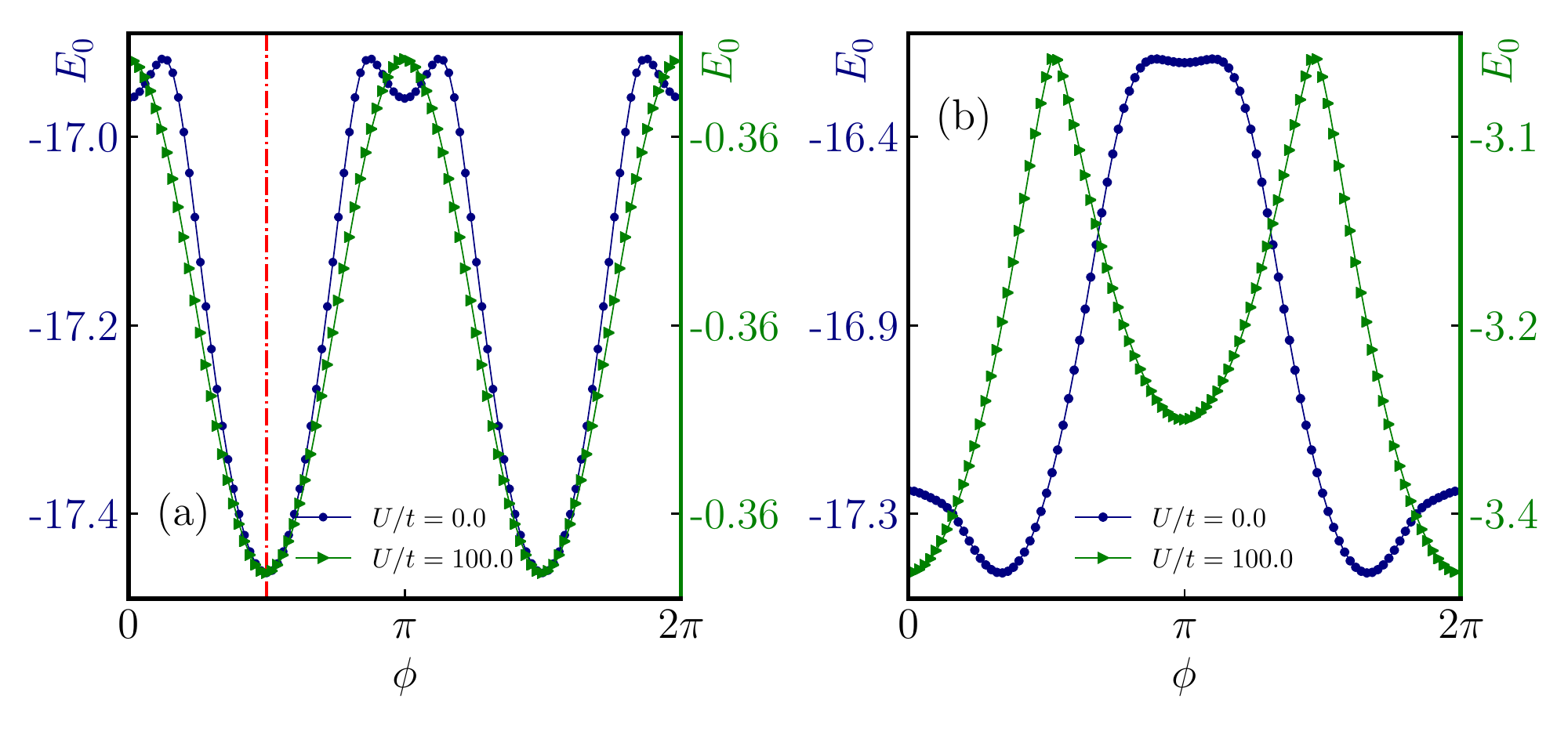}
    \caption{Ground state energy $E_{0}$ of the Hubbard model on a $2\times 2\times 3$ Kagome lattice with open boundary condition.
    Flux $\phi$ is added through each triangle while there is no flux threading through the hexagon. (a) refers to the half-filled $N_{\uparrow}=6, N_{\downarrow}=6$. The dashed line denotes $\pi/2$. (b) refers to the Nagaoka filling $N_{\uparrow}=6, N_{\downarrow}=5$.}
    \label{fig:flux_kagome}
\end{figure}
Meanwhile, we simulate the interacting fermions by the numerical ED on both triangular and Kagome lattices as shown in Figures~\ref{fig:flux_triangle} and~\ref{fig:flux_kagome}.
Note that we cannot obtain the free energy for the Hubbard model up to $N=12$ sites by ED but only the ground state energy because of the iterative algorithm~\cite{arpackpp}.

On the Kagome lattice, the half-filled case shown in Figure~\ref{fig:flux_kagome}(a) is consistent with our previous discussion.
The optimal flux pattern even up to $U/t=100.0$ is still as same as the free fermions.
For the Nagaoka filling, we could see that, in Figure~\ref{fig:flux_kagome}(b), Hubbard interactions can dramatically change the optimal flux pattern.
The optimal flux for free fermions locates at $\phi\simeq \pi/3$ there, while with $U/t=100.0$, the ground state with minimal energy locates at $\phi=0$.

For the triangular lattice, by ED we also numerically checked that the energy of flux state $[\pi/2, \pi/2]$ is lower than the state $[0, \pi]$~\cite{PhysRevLett.114.167201}, which are both $\pi$-flux state in a rhomboid.

For the bowtie lattice, when it is away from half-filling as shown in Figure~\ref{fig:flux_bowtie}(b, d, f), there is no flux singularity for free fermions.
But sufficiently strong Hubbard interactions can give birth to new emergent flux singularities as shown in Figure~\ref{fig:flux_bowtie}(f).
As discussed in 1D, we think that it is still quite reasonable to regard these very singularities as some emergent Luttinger-like NFLs features driven by strong interactions in 2D~\cite{Varma2002, RevModPhys.73.797, 10.1146}.
On the Kagome lattice with a Nagaoka filling in Figure~\ref{fig:flux_kagome}(b), we have also numerically observed similar emergent flux singularities for $U/t=100.0$.
They are the Luttinger-like evidences in pure 2D, which contradict with Fermi-liquid theory.

\section{Conclusion and discussion}
\label{sec:summary}
\subsection{Brief summary}
In this paper, we did a systematical study on the flux problem of free as well as Hubbard interacting fermions on non-bipartite lattices.
At half-filling, several rigorous results can be proved by utilizing the GRP technique.
Other fillings are studied numerically.
On a 1D non-bipartite odd-numbered ring at half-filling, the optimal flux for the ground state is at an unrenormalized $\pm\pi/2$.
While away from half-filling, the optimal flux can be altered by strong Hubbard interactions.
In 2D at half-filling, the optimal flux of the Hubbard model on the triangular lattice, $\pm[\pi/2, \pi/2]$ states are ascertained as the lowest energy flux states.
For the Kagome lattice, they are $\pm[\pi/2, \pi/2, 0]$.
On both triangular and Kagome lattice, the chiral order parameter for each triangle, namely $\mathbf{\sigma}_{1}\cdot(\mathbf{\sigma}_{2}\times\mathbf{\sigma}_{3})$, is not only nonvanishing but also maximized in the lowest energy optimal flux states.
We also addressed and discussed the numerically observed emergent flux singularities driven by strong Hubbard interactions and attributed them to some Luttinger-like NFL features.
It is easy and straightforward to generalize our results to the extended Hubbard model~\cite{PhysRevLett.73.2158, PhysRevLett.92.236401} while we would not expatiate on it here.
For other more complicated non-bipartite lattices such as decorated honeycomb~\cite{PhysRevA.90.053627} and 3D lattices~\cite{PhysRevLett.101.197202, PhysRevLett.80.2933}, we still have similar conclusions if we are in the same fermion-gauge field scenario, meaning that $\pm\pi/2$ fluxes will possibly emerge from the plaquettes with odd-numbered sites.

So far we can provide a basic answer to the question asked in the Introduction Sec.~\ref{sec:intro}:
If we lose bipartiteness, for the half-filled free as well as Hubbard interacting fermions coupled to appropriate gauge fields, time-reversal symmetry $\mathcal{T}$ and parity symmetry $\mathcal{P}$ break spontaneously.
Chiral fermions will emerge.
Ground states are not unique any longer.
The sign problem cannot be avoided if simulated by quantum Monte Carlo.

\subsection{Possible implication to quantum spin liquids}
As we have mentioned, the fermion-gauge theory coupled scheme is not only a fantastic scenario but to be quite realistic.
A pure bosonic model can be written in terms of slave-fermions~\cite{PhysRevB.37.3774, PhysRevB.49.5200}.
Gauge fields indeed can emerge from these strongly correlated bosonic quantum systems.

In recent years, the Kagome lattice has attracted a great deal of attention with respect to the study of QSLs~\cite{PhysRevLett.98.117205, PhysRevB.83.224413, PhysRevB.84.020407, PhysRevLett.112.137202, Bauer2014, PhysRevB.91.041124, PhysRevX.7.031020, Zhu5535}.
On the projected mean-field level of slave-fermions, the Dirac state $[0, 0, \pi]$ is always reported to serve as the parent state for various kinds of QSLs on the Kagome lattice~\cite{PhysRevLett.98.117205}.
However, in this paper we found that, for not only free but also Hubbard interacting fermions at half-filling, the ground state energy of the flux states $\pm[\pi/2, \pi/2, 0]$ is lower than the Dirac state $[0, 0, \pi]$.
Although the ground state wavefunction of the Hubbard model is not equivalent to Gutzwiller projected mean-field wavefunctions, we think it is still meaningful to study QSLs starting from the chiral $\pm[\pi/2, \pi/2, 0]$ states.
As well as on the triangular lattice, these optimal flux states break time-reversal symmetry $\mathcal{T}$. 
Even up to very strong Hubbard interactions, we have demonstrated that the lowest energy ground state of these half-filled fermion-gauge theory coupled systems always tend to select some $\pm\pi/2$ fluxes within plaquettes enclosed by odd-numbered links.

In this sense, our results strongly imply that time-reversal symmetry breaking chiral fermions and chiral QSLs~\cite{PhysRevB.39.11413, PhysRevB.93.094437} may be widely present as well as stabilized by emergent gauge fields in strongly correlated bosonic quantum systems on various kinds of frustrated non-bipartite lattices.
Possible examples have been reported widely such as in the Ref.s~\cite{Taguchi2573, PhysRevLett.99.247203, Gong2014, PhysRevA.93.061601, PhysRevLett.117.167202, PhysRevX.10.021042, PhysRevB.100.241111, PhysRevB.87.205111}.
While note that, an emergent $\mathbbm{Z}_{2}$ gauge field can only allow a Dirac state which does not break time-reversal symmetry $\mathcal{T}$.
If $\pm\pi/2$ fluxes are expected, the emergent gauge field must at least be a $\mathbbm{Z}_{4}$ one.
$\text{U}(1)$ gauge field is also possible but it is more subtle because of the possible confinement~\cite{Polyakov1977, PhysRevB.72.235104, PhysRevX.9.021022}.

\subsection{Outlook}
So far we have studied lattice fermions on several kinds of non-bipartite lattices, especially at half-filling.
We hope to continue to explore the fruitful features of strongly interacting fermions on non-bipartite lattices away from half-filling.
For example, if one hole doped, the Nagaoka state emerges on bipartite lattices if Hubbard interactions are sufficiently strong.
However, as far as we know, we do not have much information for the Hubbard model with Nagaoka filling on non-bipartite lattices yet.

In the continuum limit, a field theoretical description is still needed.
In 1D, the conventional bosonization technique can be applied to the system, which might lead to some new perspectives and physics.
Furthermore, we also know that chiral spin states are deeply related to the topological Chern-Simons term in field theory, which breaks the parity symmetry $\mathcal{P}$ and time-reversal symmetry $\mathcal{T}$ simultaneously. Actually, fermion statistics can also be altered by the gauge fields~\cite{POLYAKOV1988, PhysRevX.6.031043, Murugan2017} thus bosons and even anyons may emerge.
Further discussion on the application of our current results towards quantum Hall effects~\cite{PhysRevLett.50.1395, PhysRevLett.59.2095} and topological quantum field theories hopefully would be addressed in the future.

\section*{Acknowledgement}
W. Z. would like to thank D.N. Sheng for her hospitality and helpful instructions at CSUN, California, where this work was inspired and initialized.
W. Z. thanks Y.M. Lu, T.L. Ho, X.Z. Feng, Z.Y. Weng and A. Rasmussen for many valuable discussions.
W. Z. thanks the Referee for bringing several helpful suggestions to strengthen the revised manuscript.
W. Z. also acknowledges the Unity, a well-managed high-performance computing cluster in the College of Arts and Sciences of the Ohio State University.
This work is supported by the National Science Foundation under award number DMR-1653769.
\newpage
\appendix

\section{1D free spinless fermions and the Jordan-Wigner transformation}
\label{sec:1d_spinless}
The Hamiltonian of spinless free fermions on such a lattice can be written as
\begin{equation}
    H_{0}=-\sum_{j=0}^{L-1}\left(t_{j, j+1} c_{j}^{\dagger}c_{j+1}+h.c. \right),
    \label{eq:ham_free_spinless}
\end{equation}
the Hamiltonian Eq. (\ref{eq:ham_free_spinless}) becomes $H_{0}=-\sum_{k}(2\cos{k})c_{k}^{\dagger}c_{k}$ if we set $t_{j,j+1}=1.0$.
Suppose there is a fixed number of $N (N\leqslant L)$ spinless fermions living on the ring and preserve the $\text{U}(1)$ symmetry.
Different boundary conditions will impose different quantization conditions of $k$.
For example, PBC gives $kL=2l\pi, l\in\mathbbm{Z}_{+}$.
If $N$ is even, there is a two-fold ground state degeneracy.
Anti-periodic boundary condition gives $kL=(2l+1)\pi, l\in\mathbbm{Z}_{+}$.
If $N$ is odd, there is a two-fold ground state degeneracy.

The ground state is simply the one in which the lowest $N$ orbitals are occupied. 
More generally suppose there is a flux $\phi\in[0, 2\pi)$ threading the ring, we have the shifted $k=(2l\pi+\phi)/L, l\in\mathbbm{Z}_{+}$.
When it comes to the ground state(s), for $N=2n+1, n\in\mathbbm{Z}_{+}$, the optimal flux is $0$ since the lowest mode with $k=0$ can be occupied by one fermion. 
For $N=2n, n\in\mathbbm{Z}_{+}$, the optimal flux is $\pi$ for the sake of symmetric band filling.

On the other hand, we know that 1D spinless fermions can be exactly mapped to a spin model through the Jordan-Wigner transformation~\cite{RevModPhys.51.659}.
For $j=0, c_{0}=\sigma_{0}^{-}, c_{0}^{\dagger}=\sigma_{0}^{+}$; and for $j\geqslant{1}$,
\begin{equation}
\begin{aligned}
    c_{j}&=\left[\prod_{l=0}^{j-1}e^{\text{i}\pi\sigma^{+}_{l}\sigma^{-}_{l}}\right]\sigma^{-}_{j}=\left[\prod_{l=0}^{j-1}\left(-\sigma^{z}_{l}\right)\right]\sigma^{-}_{j} \\
    c^{\dagger}_{j}&=\sigma^{+}_{j}\left[\prod_{l=0}^{j-1}e^{-\text{i}\pi\sigma^{+}_{l}\sigma^{-}_{l}}\right]=\sigma^{+}_{j}\left[\prod_{l=0}^{j-1}\left(-\sigma^{z}_{l}\right)\right],
\end{aligned}
\end{equation}
where $\sigma^{\pm}=\left(\sigma^{x}\pm\text{i}\sigma^{y}\right)/2$.
$\sigma^{x, y, z}$ are the Pauli matrices.
Specifically, a 1D spin-$1/2$ XY model reads
\begin{equation}
    \begin{aligned}
        H_{XY}&=-\sum_{j=0}^{L-1}\frac{1}{2}\left(t_{j,j+1}\sigma_{j}^{x}\sigma_{j+1}^{x}+\sigma_{j}^{y}\sigma_{j+1}^{y}\right) \\
        &=-\sum_{j}^{L-1}\left(t_{j,j+1}\sigma_{j}^{+}\sigma_{j+1}^{-}+h.c.\right),
    \end{aligned}
    \label{eq}
\end{equation}
which is equivalently to a hard-core boson model
\begin{equation}
    H_{b}=-\sum_{j}^{L-1}\left(t_{j,j+1}b_{j}^{\dagger}b_{j+1}+h.c.\right)
    \label{eq:ham_hard_core_boson}
\end{equation}
if we identify $\sigma^{+}\sim b^{\dagger}, \sigma^{-}\sim b$ as the creation and annihilation operators of the hard-core bosons.
Assuming $t_{j,j+1}$ are real and $t_{j,j+1}>0, \forall{j}$, these hard-core bosons are on a periodic ring without any flux inserted.
Note that, if we carry out the Jordan-Wigner transformation of these bosons, a boundary term will appear in the fermionic model like 
\begin{equation}
    H_{JW}=-\sum_{j=0}^{L-2}\left(t_{j,j+1}c_{j}^{\dagger}c_{j+1}+h.c.\right)+Q\left(t_{L-1,0}c_{L-1}^{\dagger}c_{0}+h.c.\right),
    \label{eq:ham_jw_free_fermion}
\end{equation}
where $Q=\prod_{j=0}^{L-1}\left(-\sigma_{j}^{z}\right)=(-)^{\sum_{j}c_{j}^{\dagger}c_{j}}=(-)^{N}$ is the \emph{parity operator}.
That is, if we require Eq.~(\ref{eq:ham_jw_free_fermion}) and Eq.~(\ref{eq:ham_hard_core_boson}) are exactly mapped to each other, the boundary condition for Eq.~(\ref{eq:ham_jw_free_fermion}) must be compatible with the parity of fermions.
If $N$ is even, $Q=+1$ denoting that the Jordan-Wigner fermions have an anti-periodic boundary condition (APBC).
In another word, a fictitious $\pi$-flux is effectively inserted through the ring.
If $N$ is odd, $Q=-1$ denoting a PBC, which means a $0$-flux is inserted through the ring.
According to the \emph{natural inequality}~\cite{PhysRevLett.111.100402, PhysRevB.97.125153} theorem, the ground state energy of these hard-core bosons would never be greater than spinless fermions if they share a same form of Hamiltonians like Eq.~(\ref{eq:ham_hard_core_boson}) and Eq.~(\ref{eq:ham_free_spinless}).
If the parity is compatible with the boundary condition meaning that the Jordan-Wigner transformation is exactly valid, Eq.~(\ref{eq:ham_hard_core_boson}) and Eq.~(\ref{eq:ham_jw_free_fermion}) will have the same ground state energies, which means the lower bound of the natural inequality is touched.
Thus we conclude that the optimal flux for spinless free fermions is $\pi$ or $0$ if the particle number is even or odd, respectively.
In a word, the optimal flux states for spinless fermions should be non-chiral, which preserve the time reversal symmetry.

In the case of finite temperature and free energy $F=-1/\beta\ln{Z}$ with inversed temperature $\beta$, we have
\begin{lemma}
    For free spinless fermions with the Hamiltonian defined by Eq.~(\ref{eq:ham_free_spinless}) on a ring lattice, at any finite temperature the optimal flux for free energy is identical to the optimal flux in its ground state, namely $0$ or $\pi$ depending on the parity of particle number $N$ is odd or even.
\end{lemma}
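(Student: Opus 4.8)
The plan is to recast the problem as maximizing the canonical partition function, since $F=-\beta^{-1}\ln Z$ is a decreasing function of $Z$, and then to lift the zero-temperature Jordan--Wigner argument of this appendix to the full imaginary-time path integral. First I would Trotter-expand $Z_{N}(\phi)=\mathrm{tr}_{N}\,e^{-\beta H_{0}(\phi)}$ into closed worldline configurations of the $N$ fermions on the ring. Each configuration carries a manifestly positive kinetic weight $|B|$ (the off-diagonal elements of $-H_{0}$ are non-negative), a fermionic exchange sign $s=\pm1$, and a flux phase $e^{\mathrm{i}w\phi}$, where $w\in\mathbbm{Z}$ is the net winding number (the signed count of crossings of the boundary bond). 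Thus $Z_{N}(\phi)=\sum_{\mathrm{configs}}s\,|B|\,e^{\mathrm{i}w\phi}$. Reflection of the ring sends $w\to-w$ while preserving $s|B|$, so $Z_{N}$ is real and even in $\phi$ with period $2\pi$, and it suffices to compare $\phi\in[0,\pi]$.

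The key step is a \emph{finite-temperature natural inequality}. I claim that the exchange sign of any winding-$w$ configuration is $s=(-)^{(N-1)w}$: the only source of sign is transporting particles around the ring, and carrying one fermion once around past the remaining $N-1$ contributes $(-)^{N-1}$. Setting $\phi^{*}=0$ for odd $N$ and $\phi^{*}=\pi$ for even $N$, one then has $(-)^{(N-1)w}e^{\mathrm{i}w\phi^{*}}=\bigl[(-)^{N-1}e^{\mathrm{i}\phi^{*}}\bigr]^{w}=1$, so \emph{every} worldline weight becomes real and positive at $\phi^{*}$. This is precisely the path-integral version of the statement that at $\phi^{*}$ the model is Jordan--Wigner equivalent to the flux-free hard-core boson ring of Eq.~(\ref{eq:ham_hard_core_boson}) --- the parity operator $Q=(-)^{N}$ of Eq.~(\ref{eq:ham_jw_free_fermion}) absorbs the boundary sign --- which is stoquastic and hence sign-free. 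Consequently $Z_{N}(\phi^{*})=\sum_{\mathrm{configs}}|B|$ equals the bosonic partition function.

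Finally, writing the general amplitude as $s|B|e^{\mathrm{i}w\phi}=|B|\,e^{\mathrm{i}w(\phi-\phi^{*})}$ and using that $Z_{N}$ is real (pairing $\pm w$) gives
\begin{equation}
    Z_{N}(\phi)=\sum_{\mathrm{configs}}|B|\cos\!\bigl(w(\phi-\phi^{*})\bigr)\le\sum_{\mathrm{configs}}|B|=Z_{N}(\phi^{*}),
\end{equation}
with equality at $\phi=\phi^{*}$. Hence $\phi^{*}$ maximizes $Z_{N}$ and minimizes $F$ at every finite $\beta$, reproducing the ground-state optimum $0$ (odd $N$) or $\pi$ (even $N$) established above and proving the lemma; the natural inequality~\cite{PhysRevLett.111.100402, PhysRevB.97.125153} is recovered in the $\beta\to\infty$ limit. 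The main obstacle is rigorously establishing the sign rule $s=(-)^{(N-1)w}$ for \emph{all} configurations: worldline crossings of identical fermions generate additional permutation signs, and one must verify that these cancel so that the net sign depends only on the topological winding $w$. I expect to settle this either by the careful Jordan--Wigner bookkeeping that tracks $Q=(-)^{N}$ through the Trotter slices, or by expanding $Z_{N}(\phi)=[t^{N}]\det\bigl(1+t\,e^{-\beta h(\phi)}\bigr)$ and controlling the winding expansion of the single-particle propagator directly.
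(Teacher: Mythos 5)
Your proposal is correct and follows essentially the same route as the paper's own proof: Trotter-expand $Z$, classify contributions by the net winding $w$ across the boundary bond, attach the exchange sign $(-)^{(N-1)w}$, and conclude that $\phi^{*}=0$ (odd $N$) or $\pi$ (even $N$) maximizes every term, hence $Z$, hence minimizes $F$. The sign rule you flag as the main obstacle is in fact safe here, and is what the paper implicitly uses: with nearest-neighbor hopping and Pauli exclusion, worldlines on the ring cannot cross, so the only permutations realized are cyclic shifts congruent to $w$ modulo $N$, whose signature is exactly $(-)^{(N-1)w}$; your uniform bound $\cos\bigl(w(\phi-\phi^{*})\bigr)\le 1$ then actually tightens the paper's argument, which verifies only the lowest winding sectors and asserts the rest behave likewise.
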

\begin{proof}
The canonical partition function reads
\begin{equation}
    Z=\text{tr}\left( e^{-\beta H_{0}} \right)=\lim_{M\rightarrow\infty}\text{tr}\left[ V^{M}(\phi) \right],
    \label{eq:}
\end{equation}
and
\begin{equation}
    V(\phi)=1+\delta\left( \sum_{j=0}^{L-2}c_{j}^{\dagger}c_{j+1}+e^{\text{i}\phi}c_{L-1}^{\dagger}c_{0}+h.c. \right),
    \label{eq:}
\end{equation}
where we have chosen a specific gauge and defined $\delta\equiv t\beta/M$.
For a fixed $M$, $V^{M}(\phi)$ can be expanded to a polynomial as $V^{M}(\phi)=\sum_{\alpha}X_{\alpha}$, where $X_{\alpha}$ is a string of fermionic operators.
First of all, we make a convention to label all the lattice sites in a 1D array and then write the basis of the Hilbert space as $|\alpha\rangle\equiv\{c_{}^{\dagger}c_{}^{\dagger}\cdots|0\rangle\}$ with a fixed order of lattice fermions.
The non-vanishing $\text{tr}\left( X \right)$ requires that $X$ must recover a basis configuration to itself.
In this sense, there are two kinds of operator strings: trivial ones such as $\mathbbm{1}$ and $c_{0}^{\dagger}c_{L-1}c_{L-1}^{\dagger}c_{0}$ whose contributions are identical to the zero-flux partition function's and, nontrivial ones so long as $X$ translates \emph{at least} one fermion winding along the ring to acquire a phase $e^{\pm\text{i}\phi}$.
Note that the most significant nontrivial operator string is to translate one fermion once around the ring as taking the order of $\delta^{L}$.
If $N=2n+1, n\in\mathbbm{Z}_{+}$, the number of fermions been crossed is even, where fermion sign does not arise there.
Therefore, this kind of term takes the form as $(+e^{\text{i}\phi}+e^{-\text{i}\phi})\delta^{L}=+2\delta^{L}\cos\phi$, which maximizes at $\phi=0$.
The higher order nontrivial terms looking like $+2\delta^{2L}\cos(2\phi), +2\delta^{3L}\cos(3\phi), \cdots$ maximize with the same $\phi$.
If $N=2n, n\in\mathbbm{Z}_{+}$, thus the number of fermions been crossed is odd.
Therefore there will be an extra fermion sign arising and this kind of terms take the form as $(-e^{\text{i}\phi}-e^{-\text{i}\phi})\delta^{L}=-2\delta^{L}\cos\phi$, which maximizes at $\phi=\pi$.
The higher order nontrivial contributing terms like $+2\delta^{2L}\cos(2\phi), -2\delta^{3L}\cos(3\phi), \cdots$ which maximize with the same $\phi$.
Once the partition function is maximized, the corresponding free energy is minimized.
\end{proof}

\section{Half-filled spin-$1/2$ free fermions on a non-bipartite odd-numbered ring}
\label{app:half_filled_odd_ring}
Firstly, let us suppose $L = 4n+1, n\in\mathbbm{Z}_{+}$.
$N_{\uparrow}=2n+1, N_{\downarrow}=2n$.
There is a $\phi>0$ threaded through the ring and the corresponding momentum shift is $\phi/L$.
The ground state energy of the fermions around this local minimum can be expressed as
\begin{equation}
    \begin{aligned}
        E_{0\uparrow}&=-2t\cos\left(\frac{\phi}{L}\right)-2t\sum_{l=1}^{n}\cos\left(\frac{2l\pi-\phi}{L}\right)-2t\sum_{l=1}^{n}\cos\left(\frac{2l\pi+\phi}{L}\right), \\
        E_{0\downarrow}&=-2t\cos\left(\frac{\phi}{L}\right)-2t\sum_{l=1}^{n}\cos\left(\frac{2l\pi-\phi}{L}\right)-2t\sum_{l=1}^{n-1}\cos\left(\frac{2l\pi+\phi}{L}\right).
    \end{aligned}
    \label{eq:}
\end{equation}
$E_{0}=E_{0\uparrow}+E_{0\downarrow}$.
$\partial E_{0}/\partial\phi=0$ gives
\begin{equation}
2\sin\left(\frac{\phi}{L}\right)-2\sum_{l=1}^{n}\sin\left(\frac{2l\pi-\phi}{L}\right)+2\sum_{l=1}^{n}\sin\left(\frac{2l\pi+\phi}{L}\right)-\sin\left(\frac{2n\pi+\phi}{L}\right)=0,
\end{equation}
which is accidentally fulled with $\phi=\pi/2$ no matter what $L$ is.
Note that we have the equation
\begin{equation}
    2\sin\left(\frac{\pi}{8n+2}\right)+4\sin\left(\frac{\pi}{8n+2}\right)\sum_{l=1}^{n}\cos\left(\frac{2l\pi}{4n+1}\right)=1,
\end{equation}
which always holds.
Secondly, let us suppose $L = 4n+3, n\in\mathbbm{Z}_{+}$.
$N_{\uparrow}=2n+2, N_{\downarrow}=2n+1$.
Their ground state energies are given by
\begin{equation}
   \begin{aligned}
       E_{0\uparrow}&=-2t\cos\left(\frac{\phi}{L}\right)-2t\sum_{l=1}^{n+1}\cos\left(\frac{2l\pi-\phi}{L}\right)-2t\sum_{l=1}^{n}\cos\left(\frac{2l\pi+\phi}{L}\right), \\
       E_{0\downarrow}&=-2t\cos\left(\frac{\phi}{L}\right)-2t\sum_{l=1}^{n}\cos\left(\frac{2l\pi-\phi}{L}\right)-2t\sum_{l=1}^{n}\cos\left(\frac{2l\pi+\phi}{L}\right).
   \end{aligned}
   \label{eq:}
\end{equation} 
$E_{0}=E_{0\uparrow}+E_{0\downarrow}$.
$\partial E_{0}/\partial\phi=0$ gives
\begin{equation}
    2\sin\left(\frac{\phi}{L}\right)-2\sum_{l=1}^{n}\sin\left(\frac{2l\pi-\phi}{L}\right)+2\sum_{l=1}^{n}\sin\left(\frac{2l\pi+\phi}{L}\right)-\sin\left[\frac{2(n+1)\pi-\phi}{L}\right]=0,
\end{equation}
which is still fulled with $\phi=\pi/2$ no matter what $L$ is.
Note that the equation
\begin{equation}
2\sin\left(\frac{\pi}{8n+6}\right)+4\sin\left(\frac{\pi}{8n+6}\right)\sum_{l=1}^{n}\cos\left(\frac{2l\pi}{4n+3}\right)=1
\end{equation}
always holds.
$\phi=-\pi/2$ goes a similar procedure.
Thus in a word, as long as $L>1$ and $L$ is odd, the optimal fluxes for the ground state of half-filled free spin-$1/2$ fermions are $\pm\pi/2$, which are \emph{independent of $L$ finite or not}.
In Ref.~\cite{lieb1993} we find a similar result while we use different methods.

We can also compute the partition function in the momentum space,
\begin{equation}
\begin{aligned}
Z
&=\text{tr}\left(e^{-\beta H}\right)=\sum_{\alpha}\langle\alpha|e^{2\beta{t}\sum_{\sigma}\sum_{k}\cos{k}c_{k\sigma}^{\dagger}c_{k\sigma}}|\alpha\rangle \\
&=\sum_{\gamma, \eta}\langle\gamma|_{\uparrow}e^{2\beta t\sum_{k}\cos{k}c_{k\uparrow}^{\dagger}c_{k\uparrow}}|\gamma\rangle_{\uparrow}\cdot\langle\eta|_{\downarrow}e^{2\beta t\sum_{k}\cos{k}c_{k\downarrow}^{\dagger}c_{k\downarrow}}|\eta\rangle_{\downarrow} \\
&=\left(\sum_{\gamma}\langle\gamma|e^{2\beta t\sum_{k}\cos{k}c_{k}^{\dagger}c_{k}}|\gamma\rangle\right)\cdot\left(\sum_{\eta}\langle\eta|e^{2\beta t\sum_{k^{\prime}}\cos{k^{\prime}}c_{k^{\prime}}^{\dagger}c_{k^{\prime}}}|\eta\rangle\right),
\end{aligned}
\end{equation}
where we have written a basis as $|\alpha\rangle=|\gamma\rangle_{\uparrow}\otimes|\eta\rangle_{\downarrow}$, of which the Hilbert space dimension is $D=C_{L}^{N_{\uparrow}}\cdot C_{L}^{N_{\downarrow}}$.

\section{Other fillings on the triangular lattice}
\begin{figure}[!ht]
    \centering
    \includegraphics[width=0.85\textwidth]{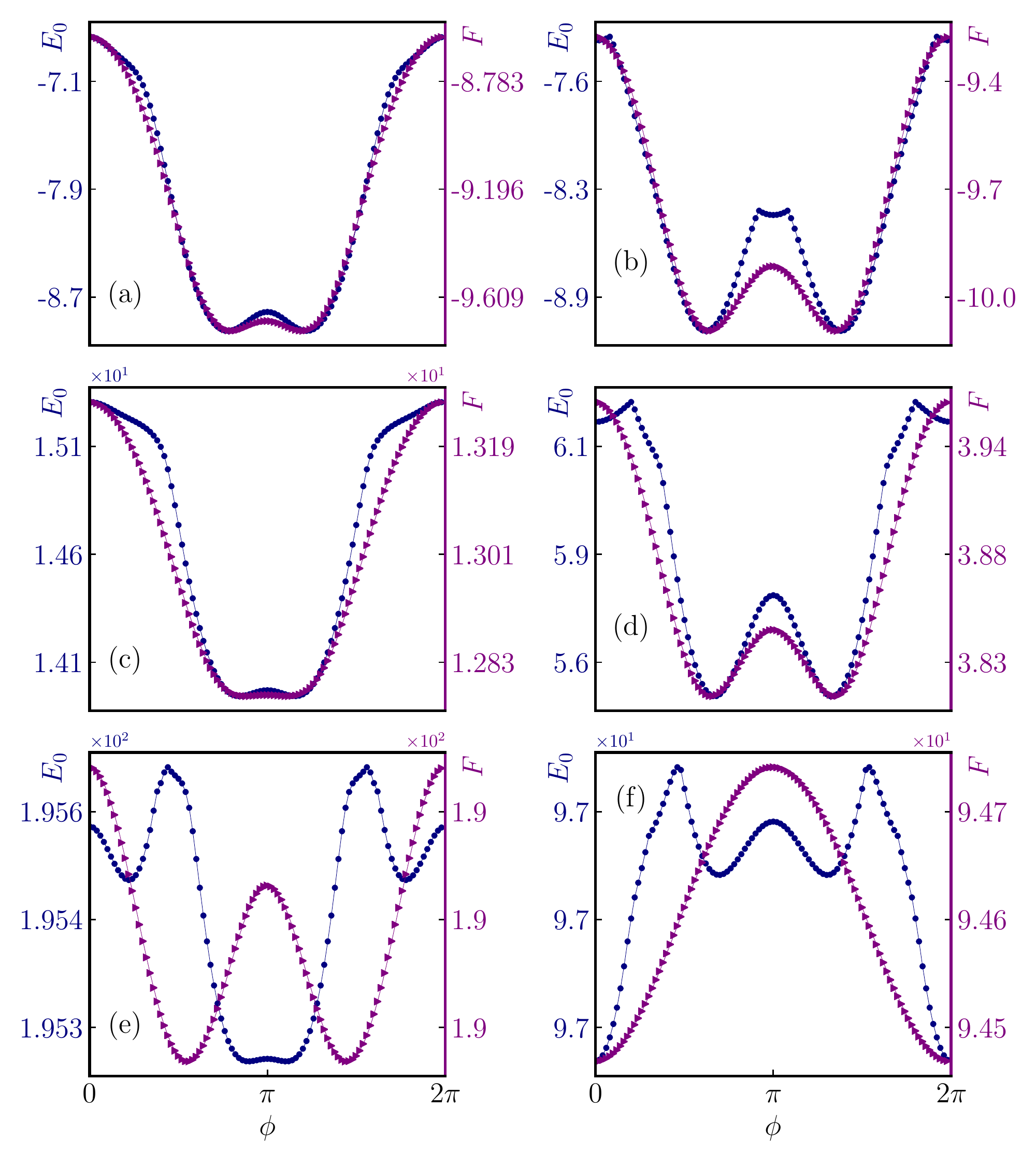}
    \caption{Ground state energy $E_{0}$ and the finite temperature free energy $F$ of the Hubbard model on a triangular lattice consisting of $6$ sites. Vertically, (a, b), (c, d) and (e, f) denote $U/t=0.0, 10.0, 100.0$, respectively. Horizontally, (a, c, e) denote half-filling $N_{\uparrow}=N_{\downarrow}=4$. (b, d, f) denote filling $N_{\uparrow}=4, N_{\downarrow}=3$. Free energy is computed at $\beta=1.0$.}
    \label{fig:flux_triangle_2}
\end{figure}
Here we compute more cases with different fillings of the Hubbard model on the triangular lattice as shown in Figure~\ref{fig:flux_triangle_2}.

\section{Review of the reflection positivity on a bipartite lattice}
\label{app:reveew_RP}
Following the Ref.~\cite{PhysRevLett.73.2158}, on a bipartite graph $\Lambda$, the kinetic energy can be defined as
\begin{equation}
    K=-\sum_{ij, \sigma}t_{ij}c_{i\sigma}^{\dagger}c_{j\sigma}.
    \label{eq:}
\end{equation}
The hopping amplitude satisfies $t_{ij}=t_{ji}^{*}$ thus the hopping matrix is Hermitian $T=T^{\dagger}$.
And the Hubbard term
\begin{equation}
    W=\sum_{j}U_{j}\left( n_{j\uparrow}- \frac{1}{2} \right)\left( n_{j\downarrow}- \frac{1}{2} \right).
    \label{eq:}
\end{equation}
The Hamiltonian is $H=K+W$. The Hamiltonian can be written as $H=H_{L}+H_{R}+H_{I}$. $H_{I}=t_{lr}c_{l}^{\dagger}c_{r}+t_{rl}c_{r}^{\dagger}c_{l}$. We are at liberty to choose $t_{lr}=t_{rl}$. Particle-hole transformation is defined as $\tau c_{i\sigma} \tau^{-1}=c_{i\sigma}^{\dagger}$ and we can find that $\tau( t_{ij}c_{i}^{\dagger}c_{j})\tau^{-1} = t_{ij}c_{i}c_{j}^{\dagger}=-t_{ji}^{*}c_{j}^{\dagger}c_{i}$, which implies
\begin{equation}
    \tau K(T) \tau^{-1}=K(-T^{*}).
    \label{eq:}
\end{equation}
Consider the so called \emph{operator reflection} $\Theta$ combined by three transformations:
\begin{enumerate}
    \item Geometric reflection $\mathcal{R}$.
    \item Particle-hole transformation $\tau$.
    \item Complex conjugation $\mathcal{C}$, which only operates on the complex amplitude.
\end{enumerate}
For instance,
\begin{equation}
    \Theta(t_{ij}c_{i}^{\dagger}c_{j})=\mathcal{C}\left[\tau(t_{i{'}j{'}}c_{i{'}}^{\dagger}c_{j{'}})\tau^{-1} \right]=\mathcal{C}\left(-t_{j{'}i{'}}^{*}c_{j{'}}^{\dagger}c_{i{'}}\right)=-t_{j{'}i{'}}c_{j{'}}^{\dagger}c_{i{'}}. 
    \label{eq:}
\end{equation}

\emph{Reflection positivity}.
Using Trotter expansion we have
\begin{equation}
    Z=\text{tr}\left( e^{-\beta H} \right)=\lim_{M\rightarrow\infty}\text{tr}\left(V^{M}\right)=\lim_{M\rightarrow\infty}\text{tr}\left[\left( V_{I}V_{L}V_{R} \right)^{M}\right],
    \label{eq:}
\end{equation}
where $V_{I}=1-\beta H_{I}/M, V_{L}=\exp\left( -\beta H_{L}/M \right), V_{R}=\exp\left( -\beta H_{R}/L \right)$. Note that $[V_{L}, V_{R}]=0$. Expanding $V^{M}=\sum_{\alpha}X^{\alpha}$, each term has the form $X=a_{0}V_{L}V_{R}a_{1}V_{L}V_{R}\dots a_{M-1}V_{L}V_{R}$. $a_{i}$ can be one of the three items $\mathbbm{1}, c_{l}^{\dagger}c_{r}, -c_{l}{c}_{r}^{\dagger}$. Our strategy is to move all the left operators to the left \emph{without changing the order of the left operators among themselves}. On of the major difficulities here is that $c_{l}^{\#}$ operators have to move through $c_{r}^{\#}$s.

Because of particle number conservation ($V_{L}, V_{R}$ already conserve the particle on each side), the number of factor $c_{l}^{\dagger}c_{r}$ must be equal to the number $c_{l}c_{r}^{\dagger}$, otherwise $\text{tr}(X)=0$. In another word, the density matrix can be represented in the particle-hole symmetric reduced sub-Hilbert space. Denote the number of pairs $c_{l}^{\dagger}c_{r}, c_{l}c_{r}^{\dagger}$ in the sequence $X$ as $N$. The first $c_{l}^{\#}$ moves through zero $c_{r}^{\#}$. The second moves through one. Thus the total number of induced fermion sign is $0+1+2\cdots+(2N)=2N^{2}$, which cancels the fermion sign.

X can be rewritten as $X=X_{L}\otimes X_{R}$. Then $\text{tr}(X)=\text{tr}(X_{L})\cdot\text{tr}(X_{R})$. And $\text{tr}(X_{L})^{*}=\text{tr}\left[ \Theta(X_{L}) \right]$ since particle-hole transformation will not change the Hamiltonian.
Thus we have $|\text{tr}(X_{L})|^{2}=\text{tr}(X_{L})\cdot\text{tr}(X_{L})^{*}=\text{tr}(X_{L})\cdot\text{tr}\left[ \Theta(X_{L}) \right]=\text{tr}[X_{L}\otimes \Theta(X_{L})]$. In the end,
\begin{equation}
    \begin{aligned}
    \lvert\text{tr}\left( V^{M} \right)\rvert^{2}
    &=\left\lvert\sum_{\alpha}\text{tr}(X^{\alpha})\right\rvert^{2}=\left\lvert\sum_{\alpha}\text{tr}(X_{L}^{\alpha})\cdot\text{tr}(X_{R}^{\alpha})\right\rvert^{2} \\
    &\leqslant\sum_{\alpha}\left\lvert\text{tr}(X_{L}^{\alpha})\right\rvert^{2}\sum_{\beta}\left\lvert\text{tr}(X_{R}^{\beta})\right\rvert^{2} \\
    &=\sum_{\alpha}\text{tr}\left[X_{L}^{\alpha}\otimes\Theta(X_{L}^{\alpha})\right]\sum_{\beta}\text{tr}\left[X_{R}^{\beta}\otimes\Theta(X_{R}^{\beta})\right].
    \end{aligned}
    \label{eq:}
\end{equation}
Then we have
\begin{lemma}
    For each $\beta\geqslant 0$ with fixed $K_{I}$,
   \begin{equation}
       Z(H_{L}, H_{R})^{2}\leqslant Z[H_{L}, \Theta(H_{L})]\cdot Z[H_{R}, \Theta(H_{R})].
       \label{eq:}
   \end{equation}
\end{lemma}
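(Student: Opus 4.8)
The plan is to assemble the factorization, the reflection identity and the Cauchy--Schwarz bound already established above into a single chain, and then to recognize the two factors on its right-hand side as Trotter approximants of the two reflected partition functions, finishing by sending $M\to\infty$. Since essentially all the analytic work is contained in the preceding \emph{reflection positivity} discussion, the lemma is a matter of correctly packaging it.

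First I would fix the Trotter slicing $V=V_{I}V_{L}V_{R}$ with $V_{I}=1-\beta H_{I}/M$ and expand $V^{M}=\sum_{\alpha}X^{\alpha}$, each string being $X^{\alpha}=a_{0}V_{L}V_{R}\cdots a_{M-1}V_{L}V_{R}$ with $a_{i}\in\{\mathbbm{1},\,c_{l}^{\dagger}c_{r},\,-c_{l}c_{r}^{\dagger}\}$. The first ingredient is the factorization $\text{tr}(X^{\alpha})=\text{tr}(X_{L}^{\alpha})\cdot\text{tr}(X_{R}^{\alpha})$: one moves every left operator to the left \emph{without} permuting the left operators among themselves, and tracks the fermion signs produced as each $c_{l}^{\#}$ is commuted past the $c_{r}^{\#}$'s. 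Particle-number conservation forces the numbers of $c_{l}^{\dagger}c_{r}$ and $c_{l}c_{r}^{\dagger}$ factors to coincide (say $N$ pairs), and the accumulated sign is $(-1)^{0+1+\cdots+2N}=(-1)^{2N^{2}}=+1$, so the factorization holds with no residual sign. The second ingredient is the reflection identity: because $\Theta$ is built from $\mathcal{R}$, the particle--hole map $\tau$ (which leaves the on-site $W$ invariant) and complex conjugation $\mathcal{C}$, one has $\text{tr}(X_{L}^{\alpha})^{*}=\text{tr}[\Theta(X_{L}^{\alpha})]$, whence $|\text{tr}(X_{L}^{\alpha})|^{2}=\text{tr}[X_{L}^{\alpha}\otimes\Theta(X_{L}^{\alpha})]$, and identically on the right.

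Applying Cauchy--Schwarz to $\big|\sum_{\alpha}\text{tr}(X_{L}^{\alpha})\text{tr}(X_{R}^{\alpha})\big|^{2}$ then yields
\[
\lvert\text{tr}(V^{M})\rvert^{2}\leqslant\Big(\sum_{\alpha}\text{tr}[X_{L}^{\alpha}\otimes\Theta(X_{L}^{\alpha})]\Big)\Big(\sum_{\beta}\text{tr}[X_{R}^{\beta}\otimes\Theta(X_{R}^{\beta})]\Big).
\]
The crucial recognition is that each factor on the right is itself the $M$-th Trotter approximant of a \emph{reflected} partition function: $\sum_{\alpha}\text{tr}[X_{L}^{\alpha}\otimes\Theta(X_{L}^{\alpha})]$ equals $\text{tr}(\widetilde{V}^{M})$ for the doubled Hamiltonian whose left part is $H_{L}$ and whose right part is $\Theta(H_{L})$, \emph{provided} the interaction $K_{I}$ is $\Theta$-invariant, which is exactly what the gauge choice $t_{lr}=t_{rl}$ real secures. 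Verifying this identification is where the main obstacle lies: one must check that the left-operator strings $X_{L}^{\alpha}$ enumerated from the expansion of $V^{M}$ are in bijection with those arising in the doubled Trotter expansion, and that the mirror half reproduces $\Theta(X_{L}^{\alpha})$ slice-by-slice, which requires that the doubling act cleanly on the particle--hole-symmetric reduced sub-Hilbert space. Once this bijection is in hand, taking $M\to\infty$ converts the three approximants into $Z(H_{L},H_{R})^{2}$, $Z[H_{L},\Theta(H_{L})]$ and $Z[H_{R},\Theta(H_{R})]$ respectively (using $Z>0$ so that $\lvert\text{tr}(V^{M})\rvert^{2}\to Z^{2}$), establishing the claimed inequality.
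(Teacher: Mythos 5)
Your proposal is correct and follows essentially the same route as the paper: the Trotter slicing $V=V_{I}V_{L}V_{R}$, the factorization $\text{tr}(X)=\text{tr}(X_{L})\cdot\text{tr}(X_{R})$ with the fermion-sign cancellation, the reflection identity $\text{tr}(X_{L})^{*}=\text{tr}[\Theta(X_{L})]$, and Cauchy--Schwarz, followed by the $M\to\infty$ limit. The only difference is cosmetic: you make explicit the identification of $\sum_{\alpha}\text{tr}[X_{L}^{\alpha}\otimes\Theta(X_{L}^{\alpha})]$ with the Trotter approximant of $Z[H_{L},\Theta(H_{L})]$ (and the role of the $\Theta$-invariant gauge choice $t_{lr}=t_{rl}$), a step the paper leaves implicit when it states the lemma immediately after the displayed inequality chain.
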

\begin{theorem}
    Assume $|t_{ij}|$ are $\Theta$ reflection invariant. $Z$ is maximized by putting flux $\pi$ in each square face of $\Lambda$.
\end{theorem}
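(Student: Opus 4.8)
The plan is to treat the reflection-positivity bound just established as the engine and to extract the optimal flux by a symmetrization argument. I would fix the magnitudes $|t_{ij}|$, which are assumed $\Theta$-invariant, and regard the fluxes $\{\phi_p\}$ through the faces of $\Lambda$ as the only free variables; by gauge invariance $Z$ depends on the link phases only through these fluxes. The goal is to show that the maximizer of $Z$ has $\phi_p=\pi$ on every square face.

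First I would fix a reflection plane $P$ that is a geometric symmetry of $\Lambda$ and splits it into mirror halves, crossing a set of links that constitute $H_I$; a gauge transformation puts $H_I$ into the reflection-compatible form $t_{lr}=t_{rl}$ required by the lemma. Applying
\[
Z(H_L,H_R)^2\leqslant Z[H_L,\Theta(H_L)]\cdot Z[H_R,\Theta(H_R)]
\]
shows that at least one of the two $\Theta$-doubled systems has a partition function no smaller than the original, so I may replace the configuration by its $\Theta$-reflection-symmetric version without decreasing $Z$. The decisive point is to track what this does to the fluxes: since $\Theta$ sends a crossing hop $t_{ij}c_{i}^{\dagger}c_{j}$ to $-t_{j'i'}c_{j'}^{\dagger}c_{i'}$, the particle-hole step contributes an extra minus sign relative to the bare geometric reflection. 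Hence in a $\Theta$-symmetric configuration the product of hopping phases around any square straddling $P$ acquires exactly this factor of $-1$, i.e. such squares are pinned to flux $\pi$.

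Next I would iterate the construction over a family of reflection planes---both orientations and all admissible positions---chosen so that every square face straddles at least one plane in the family. Each replacement leaves $Z$ non-decreasing, and after symmetrizing across the whole family the configuration is simultaneously $\Theta$-symmetric across every plane, forcing $\phi_p=\pi$ on all square faces; since $Z$ never decreased along the way, this all-$\pi$ state realizes the maximum. Equivalently, one argues that a maximizer must be invariant under each reflection (otherwise the bound, together with a suitable strictness or uniqueness input, would let us strictly increase $Z$), and the unique configuration invariant under all the chosen reflections is the uniform $\pi$-flux state.

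The main obstacle I expect is the geometric bookkeeping of the third step: verifying that a single family of reflection planes can be chosen to cover every square face, that symmetrizing across one plane does not undo the $\pi$ flux already fixed by an earlier plane, and that between successive reflections one can always gauge $H_I$ back into the form $t_{lr}=t_{rl}$ that the lemma demands. Making these compatibility and convergence statements precise---rather than the reflection-positivity estimate itself---is the technical heart of the argument, and is exactly where the planarity and bipartiteness of $\Lambda$ enter.
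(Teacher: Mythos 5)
Your overall strategy is the one the paper itself relies on (it is Lieb's strategy, which the appendix is reviewing): the reflection-positivity lemma plus the observation that the particle-hole step inside $\Theta$ contributes an extra minus sign, so that in a $\Theta$-doubled configuration every square straddling the reflection plane is pinned to flux $\pi$. That local mechanism you have exactly right, including the need to re-gauge the crossing links into the form $t_{lr}=t_{rl}$ before each reflection (this is always possible and changes no fluxes). Note, however, that the paper states the theorem immediately after the lemma and supplies no further argument -- the global step is deferred to Lieb's original reference -- so the part you candidly flag as ``the technical heart'' is precisely the part that must actually be proved, and your sketch of it does not go through as written.

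The gap is the global iteration. Sequential symmetrization over a family of planes does not produce a configuration that is ``simultaneously $\Theta$-symmetric across every plane'': each symmetrization overwrites one half of the lattice with the $\Theta$-image of the other half, and nothing guarantees that fluxes pinned to $\pi$ by an earlier plane survive in the overwritten half, nor that the process converges (symmetrizing about the second plane generically destroys the symmetry about the first). Your fallback -- that a maximizer must itself be invariant under each reflection -- needs a strictness or uniqueness input that reflection positivity does not provide: the inequality only shows that the two doubled configurations are \emph{also} maximizers, not that the original maximizer is symmetric. The standard repair is a counting argument on the set of maximizers. Among all maximizing flux configurations, choose one, $F$, maximizing the number of columns of faces carrying flux $\pi$ in every face; if some column $c$ is not all-$\pi$, reflect across the plane bisecting $c$, and keep the half $L$ containing at least half of the all-$\pi$ columns. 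Since $Z(F)^{2}\leqslant Z(F_{L})Z(F_{R})$ and $F$ is a maximizer, $F_{L}$ is again a maximizer; it retains the all-$\pi$ columns of $L$, gains their mirror images (reflection sends flux $\varphi$ to $-\varphi$, so $\pi$ stays $\pi$), and gains the straddling column(s), hence has strictly more all-$\pi$ columns -- a contradiction. Since every square face lies in exactly one column, parallel planes of a single orientation already cover all faces, so the two-orientation bookkeeping you worry about is unnecessary. One further caveat: on a periodic lattice $Z$ depends on the face fluxes \emph{and} the two global holonomies, so gauge invariance alone does not reduce the problem to $\{\phi_{p}\}$; the statement being proved concerns only the face fluxes, but a fully rigorous argument must carry the holonomies along.
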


\bibliographystyle{plain}
\bibliography{ref}
\end{document}